\def\w{{\bf w}}
\def\s{{\bf s}}
\def\h{{\bf h}}
\def\x{{\bf x}}
\def\x{{\mathbf x}}
\newtheorem{theorem}{Theorem}[section]
\newtheorem{lemma}[theorem]{Lemma}
\begin{document}
\title{Optimal Non-coherent Data Detection for\\ Massive {SIMO} Wireless Systems:\\ A Polynomial Complexity Solution}

\author{Haider Ali Jasim Alshamary, Md Fahim Anjum, Tareq Al-Naffouri, Alam~Zaib,~Weiyu~Xu
\thanks{Haider Ali Jasim Alshamary, Md Fahim Anjum and Weiyu Xu are with the Department of Electrical and Computer Engineering, University of Iowa, IA, USA. (emails: \{haider-alshamary, mdfahim-anjum, weiyu-xu\}@uiowa.edu). Tareq Al-Naffouri is with Electrical Engineering Program, King Abdullah University of Science and Technology, Saudi Arabia. (email: tareq.alnaffouri@kaust.edu.sa). Alam Zaib is with the Electrical Engineering Department, King Fahad University of Petroleum and Minerals, Saudi Arabia. (email: alamzaib@kfupm.edu.sa).}}

\maketitle

\begin{abstract}
Massive MIMO systems can greatly increase spectral and energy efficiency over traditional MIMO systems by exploiting large antenna arrays. However, increasing the number of antennas at the base station (BS) makes the uplink noncoherent data detection very challenging in massive MIMO systems. In this paper we consider the joint maximum likelihood (ML) channel estimation and data detection problem for massive SIMO (single input multiple output) wireless systems, which is a special case of wireless systems with large antenna arrays. We propose exact ML non-coherent data detection algorithms for both constant-modulus and nonconstant-modulus constellations, with a low expected complexity. Despite the large number of unknown channel coefficients for massive SIMO systems, we show that the expected computational complexity of these algorithms is linear in the number of receive antennas and polynomial in channel coherence time. Simulation results show the performance gains (up to $5$ dB improvement) of the optimal non-coherent data detection with a low computational complexity.
\end{abstract}

\begin{IEEEkeywords}
ML detection, channel estimation, massive SIMO, maximum likelihood, sphere decoder
\end{IEEEkeywords}

\section{Introduction}

 Employing multiple-antenna arrays is well known for its benefits: high reliability, high spectral efficiency and interference reduction. Recently, a new approach, \textit{massive} MIMO, has emerged by equipping communication terminals  with a huge number of antennas. This reaps the benefits of traditional MIMO systems on a much larger scale. In \cite{Thomas}, the authors mathematically showed that the effect of fast fading and non-correlated noise is eliminated as the number of receive antennas approaches infinity. This pioneer work has generated extensive research interests in massive MIMO wireless systems. For example, massive MIMO systems' information-theoretic and propagation aspects are discussed in \cite{ScalingMIMO,MUMIMO}. Research on massive MIMO has also focused on many other aspects, including  transmit and receive schemes, the effect of pilot contamination,  energy efficiency,  and channel estimation for massive MIMO systems, as reviewed in \cite{Challenges,NextGeneration}.

 To achieve the promised advantages of massive MIMO systems, knowledge of the channel state information (CSI) is required for performing uplink data detection and downlink beamforming \cite{ScalingMIMO}. However, accurately estimating the channel coefficients is a grand challenge in wireless systems, especially in fast fading environments \cite{HowmanyAntenna} and massive MIMO system. Indeed, allocating pilot symbols to estimate time-varying channels in multi-cell massive MIMO systems will result in the issue of pilot contamination, which is a fundamental limiting factor to the performance of massive MIMO systems \cite{Thomas,NextGeneration}.

 Compared with traditional MIMO systems, it is even more challenging to perform accurate channel state estimation for massive MIMO systems, since massive MIMO systems have a large number of unknown channel coefficients. In case of conventional MIMO systems, differential modulation techniques, blind and semi-blind, and pilot based algorithms are used to solve the problem of channel tracking \cite{Stoica, HarisConference, Ma, Swindlehurst, Manton}. Although these algorithms have improved the performance of traditional non-coherent MIMO systems, they are not optimized for antenna arrays with a large number of time-varying non-coherent channels, in terms of detection performance and complexity. It is of great theoretical and practical interest to investigate near-optimal or optimal joint channel estimation and data detection schemes for massive MIMO systems \cite{NextGeneration}. For example, performing joint channel estimation and data detection will help alleviate the pilot contamination issues in multi-cell massive MIMO systems \cite{NextGeneration}.

In conventional MIMO systems, most existing efficient non-coherent signal detection algorithms are suboptimal in performance, compared with the exact ML non-coherent data detection algorithms. However, there are a few exceptions. For instance, the sphere decoder algorithm was used in \cite{Hassibi} and \cite{stojnicouter} to solve the joint ML non-coherent problem for SIMO wireless systems, but only for constant-modulus constellations (such as BPSK and QPSK). This sphere decoder reduces the computational complexity by restricting the ML detection search to a subset of the signal space. In \cite{Ma}, the authors also proposed sphere decoder algorithms to achieve the joint ML channel estimation and data detection for orthogonal space time block coded (OSTBC) wireless systems. In \cite{Hassibi} and \cite{Ma}, the sphere decoder algorithms were shown to achieve the exact ML non-coherent detection performance with a lower complexity than that of the exhaustive search. However, the sphere decoders proposed in \cite{Hassibi} and \cite{Ma} only work for constant-modulus constellations. In another line of work, \cite{Weiyu} proposed an exact joint ML channel estimation and signal detection algorithm for SIMO systems with general constellations. In \cite{OFDMBlind}, the authors proposed an exact ML channel estimation and data detection for OFDM wireless systems with general constellations. In addition, \cite{Dimitris} developed an exact ML non-coherent data detection algorithm for OSTBC systems with constant-modulus constellations, using recent results on efficient maximization of reduced-rank quadratic form to achieve polynomial complexity.

The sphere decoders in \cite{stojnicouter,Hassibi,Ma} and the ML decoder in \cite{Dimitris} work only for constant-modulus constellations. Furthermore, the optimal non-coherent data detection algorithms from \cite{Hassibi}, \cite{Ma} and \cite{Weiyu} did not look at the non-coherent data detection complexity as the number of receive antennas grows large in massive SIMO systems. The algorithm in \cite{Dimitris} gives an exact ML solution only when the matrix in quadratic form optimization has low rank, but this low-rank assumption does not hold for SIMO systems with a large number of receive antennas. Finding efficient exact ML non-coherent data detection algorithms for massive MIMO systems (including SIMO systems \cite{Andera}) with general constellations was open \cite{ScalingMIMO}.

 In this paper, we propose joint \emph{exact} ML channel estimation and data detection algorithms for massive SIMO systems, which work  with both constant-modulus and nonconstant-modulus constellations. Firstly, we propose efficient exact ML non-coherent data detection algorithms, for both constant-modulus and nonconstant-modulus constellations. Secondly, we theoretically show that the expected computational complexity is linear in the number of receive antennas and polynomial in channel coherence time, which is surprising considering a large number of unknown channel coefficients in massive SIMO systems. Thirdly, we propose a new ML tree search algorithm (TSA) which achieves the exact ML performance with near-optimal search complexity. To the best of our knowledge, these algorithms are the first set of low-complexity joint \emph{exact} ML non-coherent data detection algorithms for massive SIMO systems with general constellations. The only other work which provides efficient exact ML non-coherent data detection under general constellations is \cite{Weiyu}. However, the method in \cite{Weiyu} is for traditional SIMO systems with a small number of receive antennas, and can not guarantee polynomial expected complexity for massive SIMO systems. Moreover, our algorithm in this paper is fundamentally different from the approach in \cite{Weiyu}. Simulation results demonstrate significant performance gains of our optimal non-coherent data detection algorithms. As a consequence of this work, we demonstrate the exact performance gap between the optimal and suboptimal non-coherent data detection algorithms for massive SIMO systems, under both constant-modulus and nonconstant-modulus constellations.

 We remark that, although this paper focuses on discussing massive SIMO systems, our proposed algorithms can serve as building blocks for performing iterative joint channel estimation and data detection algorithms in general massive MIMO systems. This is beyond the scope of this current journal paper, and we will leave it as future work.

The rest of this paper is organized as follows. Section \ref{sec:problem} sets up the system model. Section \ref{sec:ML} presents our ML non-coherent data detection algorithm for constant-modulus constellations. This section also includes the derivation of the expected complexity of the proposed exact ML non-coherent data detection algorithms. Section \ref{sec:NonconstantModulus} presents the ML non-coherent data detection algorithm for nonconstant-modulus constellations, and derives its complexity. Section \ref{sec:newalgorithm} proposes a new tree search algorithm (TSA) for the exact ML non-coherent detection, and derives the complexity of the TSA. Simulation results are provided and discussed in Section \ref{sec:simulation}. Section \ref{sec:conclusion} concludes our paper and highlights our contributions.

\section{The Joint Channel Estimation and Signal Detection Problem}
\label{sec:problem}
Let $T$ denote the length of a data packet during which the channel remains constant. The channel output for a SIMO system with $N$ receive antennas is given by
\begin{equation}\label{system}
X=\h\s^*+W,
\end{equation}
where $\h \in \mathcal{C}^{N \times 1}$ is the SIMO channel vector,
$\s^* \in \mathcal{C}^{1 \times T}$ is the transmitted symbol
sequence, and $W \in \mathcal{C}^{N \times T}$ is an additive noise
matrix whose elements are assumed to be i.i.d. complex Gaussian
random variables. We also assume the entries of $\s^*$ are
i.i.d. symbols from a certain modulus constellation $\Omega$ (such as BPSK or 16-QAM).

We assume $\h$ as a deterministic unknown channel with no priori information known about it \cite{Stoica}\cite{Ma}.
Then, the joint ML channel estimation and data detection problem for SIMO systems is given by the following mixed optimization problem \vspace{-0.05in}
\begin{equation}
\min_{\h, \s^* \in \Omega^T}\| X-\h\s^*\|^2, \label{eq:mixed}
\end{equation}
where $\Omega^T$ denotes the set of $T$-dimensional signal vectors. From \cite{Hassibi}, the optimization of (\ref{eq:mixed}) over $\h$ is a least square problem while the optimization of (\ref{eq:mixed}) over $\s^*$ is an integer least square problem, since each element of $\s^*$ is chosen from a fixed constellation $\Omega$. By \cite{HarisConference}, for any given symbol vector $\s^*$, the channel vector $\h$ that minimizes (\ref{eq:mixed}) is
\begin{equation}
\hat{\h}=X\s (\s^*\s)^{-1}=X\s/\|\s\|^2, \label{eq:opth}
\end{equation}
Substituting (\ref{eq:opth}) into (\ref{eq:mixed}), we get
\begin{equation}
\|  X(\underbrace { I-\frac{1}{\|\s\|^2}\s\s^*) }_{=P_{\s}}
\|^2=\text{tr}(XP_{\s}X^*)=\text{tr}(XX^*)-\frac{1}{\|\s\|^2}\s^*X^*X\s. \label{eq:optmetric}
\end{equation}

Now, for the joint ML channel estimation and data detection,  we need to maximize $\frac{1}{\|\s\|^2}\s^*X^*X\s$ in (\ref{eq:optmetric}). This maximization depends on whether the constellation of the transmitted signal is constant or not. For massive SIMO wireless systems with a large number of unknown channel coefficients, we develop algorithms to achieve the exact ML non-coherent data detection with low expected complexity, for both constant-modulus and nonconstant-modulus constellations.

\section{Joint ML Channel Estimation and Data Detection Algorithm for Constant-modulus Constellation}
\label{sec:ML}
 In this section, we provide the joint ML channel estimation and data detection algorithm for constant-modulus constellation. In addition, we will show that the expected complexity of this proposed algorithm is polynomial in the channel coherence time.

\subsection{ML Non-coherent algorithm for constant-modulus constellation}
\label{subsec:constantalgorithm}
As pointed out in \cite{HarisConference}, if the modulation constellation is constant-modulus (such as
QPSK), the minimization of (\ref{eq:optmetric}) over $\s^*$ is
equivalent to solving the following problem:
\begin{equation}
\max_{\s^* \in \Omega^T}\s^*X^*X\s, \label{eq:max}
\end{equation}
The quadratic form in (\ref{eq:max}) for a constant modulus modulation can be changed into an equivalent minimization problem by using the maximum eigenvalue of $X^*X$. Thus, (\ref{eq:max}) can be represented as
\begin{equation}
\min_{\s \in \Omega^T}\s^*\underbrace{(\rho I-\frac{X^*X}{N}}_{=\Im})\s, \label{eq:min}
\end{equation}
where $\rho$ is a slightly larger value than the maximum eigenvalue of $\frac{X^*X}{N}$. One way of solving the integer least square optimization problem in (\ref{eq:min}) is by using exhaustive search over the entire signal space. However, the computational complexity of the exhaustive search is exponential in $T$. The sphere decoder was used in \cite{HarisConference} to efficiently solve (\ref{eq:min}) with a lower computational complexity than that of the exhaustive search. Instead of searching over all the hypotheses, sphere decoder proposes to only look at the lattice points within a radius $r$. More specifically, the sphere decoder only examines sequences $\s^*$ satisfying 
\begin{equation}
\s^*(\rho I-\frac{X^*X}{N})\s \leq r^2. \label{eq:spheresearch}
\end{equation}

From the way in which $\rho$ is determined, the matrix $\Im$ in (\ref{eq:min}) is positive semidefinite. Hence, we can use the Cholesky decomposition to factorize $\Im$ as
\begin{equation}
\Im=R^*R,
\label{eq:R}
\end{equation}
where $R$ is a $T \times T$ upper triangular matrix. Now using (\ref{eq:R}), we can rewrite (\ref{eq:min}) as
\begin{align}
\label{eq:min10}
  \min_{\s^* \in \Omega^T}\s^*(\rho I-\frac{X^*X}{N})\s &= \min_{\s^* \in \Omega^T}\s^*R^*R\s\notag\\
&=\min_{\s^* \in \Omega^T}\|R \s \|^2.
\end{align}
Since $R$ is an upper triangular matrix, $R\s$ can be expanded as
\begin{equation}\label{eq:Metric}
M_{\s^*}=\sum^T_{i=1} |\sum_{k=i}^{T} L_{i,k} \s_k|^2,
\end{equation}
where $M_{\s^*}$ is the metric of the transmitted vector $\s^*$, and $L_{i,k}$ is an entry of $R$ in the $i$-th row and $k$-th column.
For each $i$ between $1$ and $T$, we further define
\begin{equation}\label{eq:Mertici1}
 M_{\s^*_{i:T}}= |\sum_{k=i}^{T} L_{i,k} \s_k|^2+M_{\s^*_{i+1:T}},
\end{equation}
where the partial sequence $\s^*_{i:T}$ consists of elements $\s^*_i$, $\s^*_{i+1}$, ..., $\s^*_{T}$, $M_{\s^*_{i:T}}$ is the metric of the partial sequence $\s^*_{i:T}$, and $M_{\s^*_{T+1:T}}=0$ by default.

Now we represent the set of possible sequences in a tree structure as in \cite{HarisConference}. In this tree structure, we have $T$ layers, and we refer to $\s^*_{i:T}$ as a layer-$i$ node in the tree. A tree node $\s^*_{i+1:T}$ is the parent node of $\s^*_{i:T}$. Now we are ready to present the algorithm for joint ML channel estimation and data detection \cite{HarisConference}.\\

\noindent \emph{\textbf{Joint ML channel estimation data detection algorithm}} \\
Input: radius $r$, matrix $R$, constellation $\Omega$ and a $1 \times T$
index vector $I$
\begin{enumerate}
\item Set $i=T$, $r_{i}=r$, $I(i)=1$ and set $\s^*_{i}=\Omega(I(i))$.
\item (Computing the bounds) Compute the metric $M_{\s^*_{i:T}}$. If
$M_{\s^*_{i:T}}>r^2$, go to 3; else, go to 4;
\item (Backtracking) Find the smallest $i\leq j \leq T$ such
that $I(j)<|\Omega|$. If there exists such $j$, set $i=j$ and go to
5; else go to 6.

\item If $i=1$, store current $\s^*$, update $r^2=M_{\s^*_{i:T}}$ and go to 3; else set $i=i-1$, $I(i)=1$ and
$\s^*_{i}=\Omega(I(i))$, go to 2.

\item  Set $I(i)=I(i)+1$ and $s^*_{i}=\Omega(I(i))$.
Go to 2.

\item If any sequence $\s^*$ is ever found in Step 4, output the latest
stored full-length sequence as the ML solution; otherwise, double $r$
and go to 1.\\
\end{enumerate}

In our analysis of this algorithm for massive SIMO systems, we will slightly change the algorithm in the last step: if no sequence is ever found in Step 4, we will increase $r$ to $\infty$. We also remark that, for downlink beamforming, one can use the $\hat{\h}$ generated from (\ref{eq:opth}), plugging in the $\s^*_{i}$ output from joint ML algorithm.

\subsection{Choice of Radius $r$}
The choice of the radius $r$ has a big influence on the complexity of this ML algorithm. If $r^2$ is chosen bigger than the metric of every sequence $\tilde{\s}\in|\Omega|^T$, the ML algorithm may visit all the tree nodes under that radius. If $r^2$ is too small, the optimal sequence may have a metric larger than $r^2$, and the joint ML algorithm will search again under a new larger radius.

 In \cite{HarisConference, SphereComplexity}, the authors derived how to choose $r$ such that with a certain probability, the transmitted sequence has a metric no bigger than $r^2$. However, the choice of radius in \cite{HarisConference} is for a fixed number of receive antennas, and  for high signal-to-noise ratio (SNR).

In this paper, we quantify the choice of radius $r$ when the number of receive antennas is big, as in massive MIMO systems. In fact, we set $r^2$ as any constant $c$ such that
$$r^2 =c<  \frac{T D_{min}}{2},$$
where $D_{min}=\min\limits_{s_1\in \Omega ,s_2 \in \Omega, s_1\neq s_2} \|s_1-s_2\|^2$ is the minimum squared distance between two constellations points.

We remark that this choice of radius is different from that in \cite{HarisConference}.  More specifically, the new radius value does not depend on the high SNR approximation in \cite{HarisConference}, and works for massive SIMO systems.  In fact, one can choose the radius of $r$ to be a positive constant arbitrarily close to 0, for a large SIMO system. In the next section, we will show that, under this new radius, the joint ML channel estimation and data detection algorithm has expected polynomial computational complexity.

\subsection{Algorithm Computational Complexity}
\label{sec:complexity}

The computational complexity of the ML noncoherent data detection algorithm for SIMO systems is mainly determined by the number of visited nodes in each layer. By ``visited nodes'', we mean the partial sequences $\s^*_{i:T}$ for which the metric $M_{\s^*_{i:T}}$ is computed in the algorithm. The fewer the visited nodes, the lower computational complexity of the joint ML algorithm. In this section, we will show that the number of visited nodes in each layer will converge to a constant number for a sufficiently large number of receive antennas. To simplify complexity analysis, we further modify Step 6 of the ML algorithm in Section \ref{sec:problem}: ``If any sequence $\s^*$ is ever found in Step 4, output the latest stored full-length sequence as the ML solution; otherwise, let $r=\infty$ and go to 1''. We call such a modified decoder as ``modified sphere decoder''. This does not affect the algorithm's optimality. To analyze the computational complexity of our algorithm, we further assume the channel vector $\mathbf{h}$ has independent zero mean unit variance complex Gaussian components. In addition, we present our proof for constant-modulus constellations, and, \emph{in this subsection}, without loss of generality, we assume $\s$ has unit expected energy, i.e.,
\begin{equation}
|\mathbf{s}_{k}|^2=1, k=1,2,...,T. \label{Power}
\end{equation}

\begin{theorem}
Let $r^2$ be a positive constant smaller than $\frac{T D_{min}}{2}$. Then for the modified sphere decoder in the ML non-coherent data detection, the expected number of visited points at layer $i$  converges to $|\Omega|$ for $i\leq (T-1)$, as the number of receive antennas $N$ goes to infinity. The sphere decoder only visits one tree node at layer $i=T$.
\label{thm:ltt1}
\end{theorem}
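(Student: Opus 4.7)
My plan is to track the partial metrics $M_{\s^{*}_{i:T}}$ as $N\to\infty$ and count the partial sequences whose limiting metric is at most $r^{2}$. First, applying the strong law separately to $\h^{*}\h/N$, $\h^{*}W/N$, and $W^{*}W/N$ (each a sum of $N$ i.i.d.\ terms under the Gaussian assumptions on $\h$ and $W$) gives $\frac{1}{N}X^{*}X \xrightarrow{\mathrm{a.s.}} \s_{0}\s_{0}^{*} + \sigma^{2}I$, where $\s_{0}$ denotes the true transmitted sequence. Consequently $\rho$ converges to the limiting maximum eigenvalue $T+\sigma^{2}$ plus whatever positive slack $\epsilon$ is built into ``slightly larger,'' so $\Im\to\Im_{\infty}=\alpha I-\s_{0}\s_{0}^{*}$ with $\alpha=\rho-\sigma^{2}>T$. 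Since $\Im_{\infty}$ is strictly positive definite, Cholesky factorisation is continuous there and $R\to R_{\infty}$ a.s., so every partial metric converges almost surely to the deterministic quantity $M^{\infty}_{\s^{*}_{i:T}}=\s_{i:T}^{*}\,R_{22}^{*}R_{22}\,\s_{i:T}$, where $R_{22}$ is the bottom-right $(T-i+1)\times(T-i+1)$ block of $R_{\infty}$.

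The next step is to compute $R_{22}^{*}R_{22}$ explicitly via the Schur-complement identity $R_{22}^{*}R_{22}=\Im_{22}-\Im_{21}\Im_{11}^{-1}\Im_{12}$. The block $\Im_{11}=\alpha I-\s_{0,1:i-1}\s_{0,1:i-1}^{*}$ is a rank-one update that Sherman--Morrison inverts explicitly, and a short calculation yields
\begin{equation*}
R_{22}^{*}R_{22} \;=\; \alpha I \;-\; \frac{\alpha}{\alpha-i+1}\,\s_{0,i:T}\s_{0,i:T}^{*},
\end{equation*}
so that $\|\s_{i:T}\|^{2}=T-i+1$ yields
\begin{equation*}
M^{\infty}_{\s^{*}_{i:T}} \;=\; \alpha(T-i+1) \;-\; \frac{\alpha}{\alpha-i+1}\,\bigl|\s_{0,i:T}^{*}\s_{i:T}\bigr|^{2}.
\end{equation*}
I would then partition $\Omega^{T-i+1}$ into the \emph{phase-aligned} set $\mathcal{P}_{i}=\{c\,\s_{0,i:T}:|c|=1,\ c\,s_{0,k}\in\Omega\}$ (of cardinality $|\Omega|$ for standard PSK) and its complement. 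For $\s_{i:T}\in\mathcal{P}_{i}$ the Cauchy--Schwarz inequality is tight, $|\s_{0,i:T}^{*}\s_{i:T}|^{2}=(T-i+1)^{2}$, and the metric simplifies to $(T-i+1)\alpha(\alpha-T)/(\alpha-i+1)$, which can be forced strictly below $r^{2}$ by pushing $\rho$ close to the limiting maximum eigenvalue.

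For $\s_{i:T}\notin\mathcal{P}_{i}$, the constant-modulus structure together with $D_{\min}$ yields the worst-case bound $|\s_{0,i:T}^{*}\s_{i:T}|^{2}\le(T-i+1)^{2}-(T-i)D_{\min}$, attained by a partial sequence that differs from the nearest element of $\mathcal{P}_{i}$ in exactly one position and by the nearest-neighbour constellation rotation. Substituting into the metric formula gives
\begin{equation*}
M^{\infty}_{\s^{*}_{i:T}} \;\ge\; \frac{\alpha\bigl[(T-i+1)(\alpha-T)+(T-i)D_{\min}\bigr]}{\alpha-i+1} \;>\; \frac{T D_{\min}}{2}
\end{equation*}
for every $i\le T-1$ and every $\alpha>T$ (the right-hand value being the infimum, approached only as $\alpha\downarrow T$ and $i=T-1$). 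By the hypothesis $r^{2}<T D_{\min}/2$, no non-phase-aligned partial sequence is visited in the limit, so with probability tending to one the layer-$i$ visited set consists exactly of the $|\Omega|$ phase-aligned partial sequences. Because the visit count is deterministically bounded by $|\Omega|^{T-i+1}$, the dominated convergence theorem upgrades the almost-sure statement to convergence of expectations, giving the claimed limit $|\Omega|$ for $i\le T-1$. At layer $i=T$ the metric $|L_{T,T}|^{2}|s_{T}|^{2}=|L_{T,T}|^{2}$ is the same scalar for every $s_{T}\in\Omega$ under constant modulus, so a single metric evaluation settles the radius test and only one visited node at layer $T$ is needed.

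The step I expect to be the main obstacle is the combinatorial bound $|\s_{0,i:T}^{*}\s_{i:T}|^{2}\le(T-i+1)^{2}-(T-i)D_{\min}$ together with the verification that the resulting lower bound on $M^{\infty}_{\s^{*}_{i:T}}$ has infimum exactly $T D_{\min}/2$ over $i\le T-1$ and $\alpha>T$. Confirming that the worst case is a single nearest-neighbour rotation (rather than a cooperative combination of several smaller rotations that conspires to produce a larger inner product with $\s_{0,i:T}$) requires a constellation-specific case analysis, and this is precisely the estimate that forces the threshold $r^{2}<T D_{\min}/2$ appearing in the hypothesis.
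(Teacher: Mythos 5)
Your overall strategy is the same as the paper's: concentrate $X^*X/N$ on its expectation via the law of large numbers, push the convergence through the (continuous) Cholesky factorization, analyze the limiting partial metrics, and upgrade the high-probability node count to an expectation using the trivial worst-case bound on the number of nodes. Your Schur-complement formula for $R_{22}^*R_{22}$ is correct and is a cleaner route to what the paper obtains by writing out the Cholesky entries explicitly (at $\alpha=T$ it reproduces $L_{i,i}^2=T-T/(T-i+1)$). However, there is a genuine gap in the counting. In the limit, every global rotation $c\,\s_0$ with $|c|=1$ and $c\s_{0,k}\in\Omega$ has exactly the same metric as $\s_0$ at every layer, so all $|\Omega|$ rotations survive the radius test down to layer $1$. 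Your claim that ``a single metric evaluation settles the radius test'' at layer $T$ does not prevent this: the backtracking step of the decoder returns to layer $T$, tries the remaining $|\Omega|-1$ symbols, and descends into each of those subtrees, since each contains a surviving rotated path. That gives $|\Omega|$ visited nodes at layer $T$ and $|\Omega|^2$ at each lower layer, contradicting the statement. (Note also that the visited set at layer $i$ is the set of \emph{children} of surviving layer-$(i+1)$ nodes, not the surviving set itself, so identifying it with $\mathcal{P}_i$ is conceptually off even when the cardinality happens to match.) The paper gets the stated counts because $\s^*_T$ is predetermined as a known reference symbol to resolve the phase ambiguity; with that pinning only $c=1$ survives, exactly one node is visited at layer $T$, and the $|\Omega|$ visited nodes at layer $i\le T-1$ are the children of the unique surviving true partial sequence.

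The combinatorial bound you flag as the main obstacle is real but avoidable. You do not need to lower-bound the inner product $|\s_{0,i:T}^*\s_{i:T}|$ for every non-aligned partial sequence at every layer: it suffices to consider the largest index $i$ with $\widetilde\s_i\neq\s_i$. There the tail $\widetilde\s_{i+1:T}=\s_{i+1:T}$ contributes zero to the limiting metric, the increment is exactly $|L_{i,i}|^2|\widetilde\s_i-\s_i|^2\ge (T/2)D_{\min}$, and partial metrics are nondecreasing as the layer index decreases, so every deviating sequence is pruned at or before its first deviation. This is precisely the paper's Lemma establishing the $TD_{\min}/2$ separation, and it sidesteps the worst-case analysis of cooperative multi-symbol rotations entirely. (Your bound is in fact provable for PSK via the identity $|\sum_l z_l|^2=m^2-\sum_{l<l'}|z_l-z_{l'}|^2$ with $z_l=\s_{0,l}^*\s_l$, but the first-deviation argument makes it unnecessary.) Finally, your treatment of $\rho$ as the maximum eigenvalue plus a fixed positive slack requires that slack to be taken small relative to $r^2$ so the aligned metric $(T-i+1)\alpha(\alpha-T)/(\alpha-i+1)$ stays below the radius; the paper avoids this by using the maximum eigenvalue itself in the analysis, making the transmitted sequence's metric exactly zero.
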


\begin{proof}[Proof of Theorem \ref{thm:ltt1}]
The number of visited nodes at layer $i$ ($1\leq i \leq T-1$) in the joint ML algorithm is equal to $|\Omega|$,  if there is one and only one tree node $\widetilde{\s}^*_{(i+1):T}$ such that $M_{\widetilde{\s}^*_{(i+1):T}} \leq r^2$. In fact, we will prove that, the transmitted  $\s^*_{(i+1):T}$ will be the only sequence satisfying $M_{\widetilde{\s}^*_{(i+1):T}} \leq r^2$, with high probability as the number of receive antennas $N \rightarrow \infty$. To prove this, we first show this conclusion is true for the average case with $\Im_{E}=\rho_{E} I-\frac{E[X^*X]}{N}$, where $\rho_{E}$ is the maximum eigenvalue of $\frac{E[X^*X]}{N}$. Then we use the concentration results for $\frac{X^*X}{N}$ to prove  that, for $\Im=\rho I-\frac{E[X^*X]}{N}$,  the transmitted  $\s^*_{(i+1):T}$ will also be the only sequence satisfying $M_{s^*_{(i+1):T}} \leq r^2$, with high probability.

For the average case, we first derive ${E}[X^*X]$, and factorize $\rho_{{E}} I-\frac{{E}[X^*X]}{N}$ using the Cholesky decomposition. Using the upper triangular matrix generated from the Cholesky decomposition, we show that the transmitted $\s^*_{(i+1):T}$ will be the only sequence satisfying $M_{\s^*_{(i+1):T}} \leq r^2$ under $\Im=\rho_{{E}} I-\frac{{E}[X^*X]}{N}$.

In fact, we can write (\ref{system}) as
\begin{align}
   [\x_{1} \;  \x_{2} \;  \cdot \;  \cdot \;  \x_{T}] &= [\s^*_{1}\h \; \s^*_{2}\h \; \cdot  \; \cdot \; \s^*_{T}\h]+[\w_{1} \; \w_{2} \; \cdot \; \cdot \; \w_{T}]\notag\\
   &= [\s^*_{1}\h+\w_{1} \;\; \s^*_{2}\h+\w_{2} \;\; \cdot  \;\; \cdot \;\; \s^*_{T}\h+\w_{T}],\notag\\ \nonumber
\end{align}
where $\x_{i}$ is the $i$-th column vector of $X$. Then ${E}[X^*X]$ is equal to
\begin{equation}
E \left\{\begin{bsmallmatrix}(\s^*_{1}\h+\w_{1})^* \\ (\s^*_{2}\h+\w_{2})^* \\ \vdots \\ (\s^*_{T}\h+\w_{T})^*\end{bsmallmatrix}
\begin{bsmallmatrix} (\s^*_{1}\h+\w_{1}) & (\s^*_{2}\h+\w_{2}) & \cdots & (\s^*_{T}\h+\w_{T}) \end{bsmallmatrix} \right\}\notag\\ \nonumber.
\end{equation}
Since the entries of $\h$ are independent complex Gaussian random variables with unit variance and zero mean, ${{E}}[\h^* \h]= E[\sum^N_{i=1} h_{i}^*h_{i}]
=N$.  After some algebra,  we have
\begin{equation}
{E}[X^*X]/N=
\begin{bmatrix}
	\s_{1}\s^*_{1}+\sigma^2_{w} & \s_{1}\s^*_{2} & \cdots & \s_{1}\s^*_{T} \\
    \s_{2}\s^*_{1} & \s_{2}\s^*_{2}+\sigma^2_{w} & \cdots & \s_{2}\s^*_{T} \\
    \vdots & \vdots & \ddots & \vdots \\
    \s_{T}\s^*_{1} & \s_{T}\s^*_{2}  &  \cdots & \s_{T}\s^*_{T}+\sigma^2_{w} \\
    \end{bmatrix} \label{X*X}.
\end{equation}

We can see that (\ref{X*X}) is a Hermitian matrix with a full column rank. The maximum eigenvalue of $\frac{{E}[X^*X]}{N}$ is $\rho_{{E}}=T+\sigma^2_{w}$. Now we can write $A=\rho_{{E}} I- \frac{{E}[X^*X]}{N}$ as \begin{equation}
A=
\begin{bmatrix}
			T-\s_{1}\s^*_{1} & -\s_{1}\s^*_{2} & \cdots & -\s_{1}\s^*_{T} \\
             -\s_{2}\s^*_{1} & T-\s_{2}\s^*_{2} & \cdots & -\s_{2}\s^*_{T}\\
             \vdots & \vdots & \ddots &\vdots \\
             -\s_{T}\s^*_{1} & -\s_{T}\s^*_{2}& \cdots & T-\s_{T}\s^*_{T} \\
             \end{bmatrix} \nonumber.
\end{equation}
Using the Cholesky decomposition in \cite{NumbericalMathmatics}, we can decompose $(\rho_{{E}} I- \frac{{E}[X^*X]}{N})$ into $\grave{R}^*\grave{R}$ where $\grave{R}$ is the upper triangular matrix of Cholesky decomposition, and can be formed as
 \begin{equation}
  \grave{R}=\begin{bmatrix} L_{1,1} & L_{1,2}  & L_{1,3}  & \cdot  & \cdot & \ L_{1,T}\\
                    0&L_{2,2}  &L_{2,3}  &\cdot  &\cdot  & L_{2,T}\\
                    0 & 0  &L_{3,3}  &\cdot  &\cdot  & L_{3,T}\\
                    0&0  &0 &\cdot &\cdot & L_{T,T} \notag\end{bmatrix},
 \end{equation}
where $L_{i,i}= \sqrt{a_{i,i}-\sum^{i-1}_{k=1} L_{k,i}L^*_{k,i}}$, $L_{i,j}=\frac{1}{L_{i,i}}( a_{i,j}-(\sum^{i-1}_{k=1}L_{k,i} L^*_{k,j})^*)$ for $1\leq i<j\leq T$, and $a_{i,j}$ is an entry of $(\rho_{{E}} I- \frac{{E}[X^*X]}{N})$ with row index $i$, and column index $j$. Thus, $\grave{R}$ is given by (\ref{eq:Rmat}) (listed on the top of next page).
\begin{figure*}
	\begin{equation}
  		\grave{R}=\begin{bmatrix} \sqrt{T-1} & \frac{-(\s_{1}\s^*_{2})}{\sqrt{T-1}}  & \frac{-(\s_{1}\s^*_{3})}{\sqrt{T-1}}  & \cdots & \frac{-(\s_{1}\s^*_{T})}{\sqrt{T-1}}\\
                    0 & \sqrt{T-1 -\frac{1}{T-1}}  & \frac{1}{L_{2,2}} \left [-(\s_{2}\s^*_{3})- \frac{(\s_{2}\s^*_{3})}{T-1}\right]  & \cdots  & \frac{1}{L_{2,2}} \left [-(\s_{2}\s^*_{T})- \frac{(\s_{2}\s^*_{T})}{T-1}\right]\\
                    0 & 0  &\sqrt{T-1 -\frac{1}{T-1}-\frac{T}{(T-1)(T-2)}}  & \cdots  & \frac{1}{L_{3,3}} \left [-(\s_{3}\s^*_{T})- \frac{(\s_{3}\s^*_{T})}{T-1}-\frac{(\s_{3}\s^*_{T})T}{(T-1)(T-2)}\right]\\
                    \vdots & \vdots & \vdots & \ddots & \vdots \\
                    0      &   0    &   0    & \cdots & \sqrt{T-1 -\frac{1}{T-1}-\cdot-\frac{T}{(T-(T-2))(T-(T-1))}}\end{bmatrix}.
    \label{eq:Rmat}
	\end{equation}
\hrule
\end{figure*}

We can see that $L_{ii}=\sqrt{(T-1)-\sum_{j=1}^{i-1} \frac{T}{(T-(j-1))(T-j)}}$ for $1< i \leq T$. Now we can use $ \grave{R}$ in (\ref{eq:Rmat}) as the upper triangular matrix of Cholesky decomposition to solve the minimization equation in (\ref{eq:min10}). In fact, based on (\ref{eq:Metric}), the metric $M_{\s^*_{1:T}} (\grave{R})$ from (\ref{eq:min}) is
\begin{align}\label{eq:Mertici3}
M_{\s^*_{1:T}}=\s^*A\s &=\s^*(TI-\s\s^*)\s \notag\\
  &=T\s^*\s-\s^*\s\s^*\s\notag\\
  &=T^2-T^2 \notag\\
  &=0,
\end{align}
since $\s^* \s=T$. Because $M_{\s^*}=\sum^T_{i=1} |\sum_{k=i}^{T} L_{i,k} \s_k|^2$, from (\ref{eq:Mertici3}), we must have $| \sum_{k=i}^{T} L_{i,k}$ $ \s_k|^2=0$ for every $1\leq i \leq T$. This, in turn, implies that $M_{\s^*_{i:T}}=0$, and $\sum_{k=i}^{T} L_{i,k} \s_k=0$ for every $1\leq i \leq T$.  On the other hand, according to Lemma \ref{thm:ltt2} (the proof of which is provided in the appendix), for any other $\widetilde{\s}\neq \s$, $M_{\widetilde{\s}^*_{i:T}} \neq 0$,  where $i$ is the integer closest to $T$ such that $\s^*_i \neq \widetilde{\s}^*_{i}$.

\begin{lemma}
Let $\s^*$ be the transmitted data sequence. Let us consider using $\rho_{{E}}I-\frac{{E}[X^*X]}{N}$ for calculating the sequence metric.  For any $ \widetilde{\s}^*$ such that $\widetilde{\s}^*\neq \s^*$, $M_{\widetilde{\s}^*_{j:T}}\geq \frac{T D_{min}}{2}$ at any layer $j\leq i$, where $i$ is the largest integer such that $ \s^*_{i} \neq \widetilde{\s}^*_{i}$
\label{thm:ltt2}
\end{lemma}

When $i=T$,  the joint ML algorithm  will visit only $1$ tree node, namely $\s^*_{T}$, whose metric is equal to $0$,  because $\s^*_{T}$ is predetermined to resolve phase ambiguity; when $i<T$, at layer $i$, we also only have one sequence $\widetilde{\s}^*_{i:T}=\s^*_{i:T}$ such that $M_{\widetilde{\s}^*_{i:T}}= 0$. This will prove Theorem \ref{thm:ltt1}, under the assumption that $X^*X={E}[X^*X]$.

Now we proceed to prove that, with high probability, $X^*X/N$ is close to ${E}[X^*X]/N$, and thus the expected number of visited nodes under $\rho I-\frac{X^*X}{N}$ is very close to the case for $\rho_{{E}} I-\frac{{E}[X^*X]}{N}$. In fact, $\frac{(X^*X)_{i,j}}{N}$ can be written as the average of $N$ independent random variables under considered channel model:
\begin{align}\label{eq:factarizationofX}
\frac{(X^*X)_{i,j}}{N}&=\frac{(\s^*_{i}\h+\w_{i})^*(\s^*_{j}\h+\w_{j})}{N}\notag\\
&=\frac{\sum\limits_{k=1}^{N}(\s^*_{i}\h_k+\w_{k,i})^*(\s^*_{j}\h_{k}+\w_{k,j})}{N}\notag\\
&=\s_{i}\s^*_{j}\frac{\sum^{N}_{k=1} \h^*_{k}\h_{k}}{N}+\frac{\sum^{N}_{k=1} \w^*_{k,i} \w_{k,j}}{N}\notag\\
&+\frac{\s_{i} \sum^{N}_{k=1} \h^*_{k} \w_{k,j}}{N}+\frac{\s^*_{j} \sum^{N}_{k=1} \w^*_{k,i} \h_{k}}{N}, \\ \nonumber
\end{align}
where $\w_{i}$ is the $i$-th column of $W$. Then we can find the expectation and the variance of (\ref{eq:factarizationofX}) as follows:
\begin{align}\label{eq:EXP}
{E}[\frac{(X^*X)_{i,j}}{N}]&= \s_{i}\s^*_{j}\frac{\sum^{N}_{k=1}  {E}(\h^*_{k}\h_{k})}{N}+\frac{\sum^{N}_{k=1} {E}(\w^*_{k,i} \w_{k,j})}{N}\notag\\
&+\frac{\s_{i} \sum^{N}_{k=1} {E}(\h^*_{k} \w_{k,j})}{N}+\frac{\s^*_{j} \sum^{N}_{k=1} {E}(\w^*_{k,i} \h_{k})}{N}, \notag\\
&=\begin{cases}
    1+\sigma^2_{w},  & \text{if } i= j\\
    \s_{i}\s^*_{j},  & \text{otherwise}
\end{cases}
\end{align}

\begin{equation}\label{eq:VAR}
var(\frac{(X^*X)_{i,j}}{N})=(1+2\sigma^2_{w}+\sigma^4_{w})/N.
\end{equation}
We provide the proof of (\ref{eq:VAR}) in Appendix \ref{appendix:variance}.

The weak law of large numbers states that the sample mean of a random variable converges to its expectation in probability. Thus, for any pair $1\leq i,j \leq N$, for any constant $\xi>0$ and $\epsilon>0$, as $N\rightarrow \infty$, we have
\begin{equation}
P(|\frac{(X^*X)_{i,j}}{N}- \frac{{E}[(X^*X)_{i,j}]}{N}| \geq\varepsilon)\leq \xi.
\end{equation}

This means that, for any $\xi>0$ and $\epsilon>0$, as $N\rightarrow \infty$, we have
\begin{equation}
P(\|\frac{X^*X}{N}- \frac{{E}[X^*X]}{N}\|_F \leq \varepsilon)\geq 1-\xi,
\end{equation}
where $\|\cdot\|_{F}$ is the Frobenius norm.

Since $\rho$ is the maximum eigenvalue of $\frac{X^*X}{N}$, by the triangular inequality for the spectral norm
$$|\rho-\rho_E| <\|\frac{X^*X}{N}- \frac{{E}[X^*X]}{N}\|_2.$$

Since $$\|\frac{X^*X}{N}- \frac{{E}[X^*X]}{N}\|_2\leq \|\frac{X^*X}{N}- \frac{{E}[X^*X]}{N}\|_F,$$
we have
$$|\rho-\rho_{{E}}| <\|\frac{X^*X}{N}- \frac{{E}[X^*X]}{N}\|_F \leq \epsilon,$$
with probability at least $1-\xi$, as $N\rightarrow \infty$.

Using the triangular inequality for the spectral norm and the Frobenius norm, we have
$$\|\rho I -\frac{X^*X}{N}- (\rho_{E}I-\frac{{E}[X^*X]}{N})\|_2\leq 2\epsilon,$$
and
$$\|\rho I -\frac{X^*X}{N}- (\rho_{E}I-\frac{{E}[X^*X]}{N})\|_F\leq (\sqrt{T}+1)\epsilon,$$
with probability at least $1-\xi$, as $N\rightarrow \infty$.

Now since the Cholesky decomposition of  $(\rho I -\frac{X^*X}{N})$ is continuous at the point
$A=\rho_{{E}}I-\frac{{E}[X^*X]}{N}$,  for any $\epsilon>0$ and $\xi>0$, as $N\rightarrow \infty$,
$$\|R-\grave{R}\|_F \leq \epsilon$$
holds true with probability at least $1-\xi$. Thus as $N\rightarrow \infty$, for any full-length sequence $\widetilde{\s}^*$, with probability at least $1-\xi$,
$$|M_{\widetilde{\s}^*_{i:T}}^{\grave{R}}-M^{R}_{\widetilde{\s}^*_{i:T}}|=|\widetilde{\s}^*(R_{i:T}-\grave{R}_{i:T})\widetilde{\s}| \leq \|\widetilde{\s}\|^2 \|R-\grave{R}\|_F,$$
which is no bigger than $\|\widetilde{\s}\|^2 \epsilon$. Note here the superscripts $R$ and $R'$ in $M_{\widetilde{\s}^*_{i:T}}^{\grave{R}}-M^{R}_{\widetilde{\s}^*_{i:T}}$ describe which upper triangular matrix is used in calculating the metric.

Since we can take $\epsilon$ to be arbitrarily small, this means that, for a small enough $\epsilon$, the number of visited nodes per layer will also be equal to $|\Omega|$ under  matrix $\rho I -\frac{X^*X}{N}$, with probability at least $(1-\xi)$. For a small enough constant $\epsilon>0$ and any constant $\xi>0$, as $N\rightarrow \infty$, the expected number of visited nodes at layer $i$  is upper bounded by
$$|\Omega|+(1-\xi) |\Omega|^{T-i},$$ since the largest number of visited nodes at layer $i$ when $r=\infty$  is $|\Omega|^{T-i}$. Taking arbitrary small $\xi>0$, the expected number of visited nodes at layer $i$ will approach $|\Omega|$.
\end{proof}

In summary, we have shown that, under a fixed $\sigma_w^2$ or SNR, the sphere decoder can achieve an expected complexity of polynomial growth. In fact, as stated in Theorem \ref{thm:varyingSNR}, we can even lower the SNR requirement for each antenna, while still providing the ML non-coherent detection with polynomial expected complexity.

\begin{theorem}
 Let $r^2$ be a positive constant smaller than $TD_{min}/2 $. If $\sigma_{w}^2=o(\sqrt{N})$, then for the modified sphere decoder for the ML non-coherent data detection, the expected number of visited points at layer $i$ converges to $|\Omega|$ for $i\leq (T-1)$, as the number of receive antennas $N$ goes to infinity. The sphere decoder only visits one tree node at layer $i=T$. Here $o(\sqrt{N})$ means that $\lim_{N\rightarrow \infty}{\sigma_w^2/\sqrt{N}}=0$.
\label{thm:varyingSNR}
\end{theorem}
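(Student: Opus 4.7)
The plan is to mirror the proof of Theorem \ref{thm:ltt1}, refining only the concentration step to accommodate a noise power that is allowed to grow with $N$.

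First I would note that the average matrix $A = \rho_{{E}} I - E[X^*X]/N = TI - \s\s^*$ carries no $\sigma_w^2$ dependence at all: since $\rho_{{E}} = T + \sigma_w^2$ and $E[X^*X]/N$ has an additional $\sigma_w^2 I$ on its diagonal, the $\sigma_w^2$ terms cancel. Consequently the Cholesky factor $\grave{R}$ of (\ref{eq:Rmat}), the fact that the transmitted $\s$ realizes zero metric, and the metric-separation bound $M_{\widetilde{\s}^*_{j:T}} \geq T D_{min}/2 > r^2$ of Lemma \ref{thm:ltt2} are all identical to those used in Theorem \ref{thm:ltt1}. No re-derivation of the average-case geometry is needed.

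The only point at which the noise variance enters the previous proof is the concentration of $X^*X/N$ around its mean. The variance formula (\ref{eq:VAR}) still reads $\mathrm{var}((X^*X)_{i,j}/N) = (1 + 2\sigma_w^2 + \sigma_w^4)/N$, and under the hypothesis $\sigma_w^2 = o(\sqrt{N})$ the dominant term $\sigma_w^4/N$ still vanishes. Chebyshev's inequality, combined with a union bound over the $T^2$ matrix entries, then yields, for any fixed $\epsilon > 0$ and $\xi > 0$ and all sufficiently large $N$,
\begin{equation*}
P\bigl(\|X^*X/N - E[X^*X]/N\|_F \leq \epsilon\bigr) \geq 1 - \xi.
\end{equation*}

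From here the argument transfers verbatim from the proof of Theorem \ref{thm:ltt1}: the spectral-norm triangle inequality controls $|\rho - \rho_{{E}}|$ by $\epsilon$, continuity of the Cholesky decomposition yields $\|R - \grave{R}\|_F$ small, and for every partial sequence $\widetilde{\s}^*_{i:T}$ the metric gap $|M_{\widetilde{\s}^*_{i:T}}^{R} - M_{\widetilde{\s}^*_{i:T}}^{\grave{R}}| \leq \|\widetilde{\s}\|^2 \|R - \grave{R}\|_F$ can be forced strictly below $T D_{min}/2 - r^2$. Only the transmitted partial sequence then passes the radius test at each layer, so the expected number of visited nodes at layer $i \leq T-1$ is at most $|\Omega| + (1-\xi)|\Omega|^{T-i}$, which tends to $|\Omega|$ as $\xi \to 0$; at layer $T$ the phase-ambiguity convention keeps the count at one. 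The main obstacle, and what ultimately fixes the $o(\sqrt{N})$ threshold, is the $\sigma_w^4/N$ term in (\ref{eq:VAR}): any faster divergence of $\sigma_w^2$ would destroy the second-moment concentration and the proof strategy would collapse, whereas any slower rate (in particular any fixed SNR) leaves the machinery intact.
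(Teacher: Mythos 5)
Your proposal is correct and follows essentially the same route as the paper, which proves Theorem \ref{thm:varyingSNR} by rerunning the argument of Theorem \ref{thm:ltt1} and observing that $\mathrm{var}((X^*X)_{i,j}/N)=(1+2\sigma_w^2+\sigma_w^4)/N\to 0$ precisely when $\sigma_w^2=o(\sqrt{N})$. Your added observations --- that the $\sigma_w^2$ terms cancel in $\rho_{E}I-E[X^*X]/N$ so the average-case geometry is unchanged, and that Chebyshev plus a union bound replaces the fixed-distribution law of large numbers --- are correct and make the paper's one-line justification slightly more explicit.
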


In fact, we can prove Theorem \ref{thm:varyingSNR} through the same arguments in proving Theorem \ref{thm:ltt1}, by noting that the variance $var(\frac{(X^*X)_{i,j}}{N})$ converges to $0$ as $N \rightarrow \infty$, if $\sigma_{w}^2=o(\sqrt{N})$. Since we fix the transmission power and the wireless channel model, $\sigma_{w}^2=o(\sqrt{N})$ means that the SNR per receive antenna is allowed to decrease, as long as  $\text{SNR} \sqrt{N} \rightarrow \infty$ as $N \rightarrow \infty$. For example, the $\text{SNR}$ can scale as $O(\log(\log(N))/\sqrt{N})$ as $N \rightarrow \infty$. This implies that we can achieve the ML non-coherent detection with low complexity, while increasing the energy efficiency of massive SIMO systems.

\section{Joint ML Channel Estimation and Data Detection Algorithm for Nonconstant-Modulus Constellations}
\label{sec:NonconstantModulus}

In Section \ref{sec:ML}, we introduced joint ML channel estimation and data detection algorithm for constant modulus constellations, and analyzed its expected complexity when $N \rightarrow \infty$. In this section, we extend our work to nonconstant-modulus constellation, and derived its complexity. This paper provides the first joint ML channel estimation and data detection algorithm for massive SIMO systems with nonconstant-modulus constellations with polynomial expected complexity.

 For nonconstant-modulus constellation, we can change the problem of maximizing (\ref{eq:optmetric}) to an equivalent minimization problem over $\s^*$
 \begin{equation}
\min_{\s^* \in \Omega^T} \frac{\s^*(\rho I-\frac{X^*X}{N})\s}{||\s||^2}, \label{eq:noncon-min}
\end{equation}
where, again, $\rho$ is slightly larger than the value of the maximum eigenvalue of $\frac{X^*X}{N}$. Now, $(\rho I-\frac{X^*X}{N})$ is a positive semidefinite matrix and can be factorized using Cholesky decomposition. Then, it can be shown that equation (\ref{eq:noncon-min}) can still be successfully transferred into another minimization problem
\begin{equation}
\min_{\s^* \in \Omega^T}\frac{\| R \s \|^2}{\|\s \|^2}, \label{eq:noncon-min2}
\end{equation}
where $R$ is the upper triangular matrix of Cholesky decomposition.

Since different sequences may have different energy, the $\|\s\|^2$ term in (\ref{eq:noncon-min2}) prevents us from solving this minimization problem through the regular sphere decoder approach.  As a result, solving (\ref{eq:noncon-min2}) by directly using the same approach as in Section \ref{sec:ML} is invalid for nonconstant modulus constellation.

In our new algorithm, we will instead lower bound $\frac{\| R \s \|^2}{\|\s \|^2}$ for partial sequences $\s_{i:T}$, taking sequence energy into consideration. To illustrate our new approach, we focus on the 16-QAM constellation $\Omega$, which comprises 16 points $a+b j$, where $a\in \{\pm 1, \pm 3\}$ and $b\in \{\pm 1, \pm 3\}$. Note that in this section, we do not assume constellation points of unit energy. The maximum energy of a constellation point in 16-QAM is thus $3^2+3^2=18$.

To lower bound $\frac{\|R \s \|^2}{\|\s \|^2}$,  we will divide the sequence $\s$ into two parts $\s_{1:i-1}$ and $\s_{i:T}$. For any partial sequence $\s^*_{i:T}$, we define a new metric, $\bar{M}_{\s^*_{i:T}}$ as,
 \begin{equation}\label{eq:Mertici1-non}
 \bar{M}_{\s^*_{i:T}}= \frac{M_{\s^*_{i:T}}}{18(i-1)+\|\s^*_{i:T}\|^2}.
\end{equation}
where $M_{\s^*_{i:T}}$ is the metric defined in (\ref{eq:Mertici1}). In fact, $\bar{M}_{\s^*_{i:T}}$ is a lower bound on  $\frac{M_{\s^*_{i:T}}}{\|\s^*_{1:i-1}\|^2+\|\s^*_{i:T}\|^2}$ or $\frac{\| R \s \|^2}{\|\s \|^2}$. We further notice that, for $i=1$, $$\bar{M}_{\s^*_{1:T}}=\frac{\| R \s \|^2}{\|\s \|^2}.$$
For other types of constellations, we can just replace $18$ in (\ref{eq:Mertici1-non}) by the maximum energy of a constellation point.

Following the setup above, we now give the Joint ML channel estimation data detection algorithm for nonconstant-modulus constellations, using the 16-QAM constellation as one example. Even though the problem is not an integer least square problem any more, we can still prove the optimality of our algorithm under the new metric.

\noindent \emph{\textbf{Joint ML channel estimation data detection algorithm for nonconstant-modulus constellations}} \\
Input: radius $r$, matrix $R$, constellation $\Omega$ and a $1 \times T$
index vector $I$
\begin{enumerate}
\item Set $i=T$, $r_{i}=r$, $I(i)=1$ and set $\s^*_{i}=\Omega(I(i))$.
\item (Computing the bounds) Compute the metric $\bar{M}_{\s^*_{i:T}}$. If
$\bar{M}_{\s^*_{i:T}}>r^2$, go to 3; else, go to 4;
\item (Backtracking) Find the smallest $i\leq j \leq T$ such
that $I(j)<|\Omega|$. If there exists such $j$, set $i=j$ and go to
5; else go to 6.

\item If $i=1$, store current $\s^*$, update $r^2=\bar{M}_{\s^*_{i:T}}$ and go to 3; else set $i=i-1$, $I(i)=1$ and
$\s^*_{i}=\Omega(I(i))$, go to 2.

\item  Set $I(i)=I(i)+1$ and $s^*_{i}=\Omega(I(i))$.
Go to 2.

\item If any sequence $\s^*$ is ever found in Step 4, output the latest
stored full-length sequence as the ML solution; otherwise, double $r$
and go to 1.\\
\end{enumerate}

\begin{theorem}
The proposed joint ML channel estimation and data detection algorithm outputs the correct joint ML sequence $\hat{s}^*$, under nonconstant-modulus constellations, by using the new metric in (\ref{eq:Mertici1-non}).
\label{thm:correct_nonconstant}
\end{theorem}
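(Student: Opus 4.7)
The plan is to treat the proposed algorithm as a branch-and-bound search on the tree of partial sequences, with $\bar{M}_{\s^*_{i:T}}$ serving as an admissible lower bound on $\frac{\|R\s\|^2}{\|\s\|^2}$ for every full extension of $\s^*_{i:T}$. Correctness then reduces to three points: (i) $\bar{M}_{\s^*_{i:T}}$ is indeed such a lower bound, (ii) at a leaf ($i=1$) the bound is tight, so $\bar{M}_{\s^*_{1:T}} = \frac{\|R\s\|^2}{\|\s\|^2}$, and (iii) the radius-update rule together with the outer doubling loop guarantees that the ML sequence is eventually visited and ends up as the latest stored candidate.

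For (i), I would exploit the upper-triangular structure of $R$ to split the metric as
\begin{equation}
\|R\s\|^2 \;=\; \sum_{j=1}^{T} \bigg| \sum_{k=j}^{T} L_{j,k} \s_k \bigg|^2 \;=\; \sum_{j=1}^{i-1} \bigg| \sum_{k=j}^{T} L_{j,k}\s_k \bigg|^2 + M_{\s^*_{i:T}} \;\geq\; M_{\s^*_{i:T}},
\end{equation}
the inequality following because each term in the leading sum is a squared modulus. For the denominator, every 16-QAM constellation point has squared modulus at most $18$ (more generally, at most $\max_{s \in \Omega}|s|^2$), so
\begin{equation}
\|\s\|^2 \;=\; \|\s^*_{1:i-1}\|^2 + \|\s^*_{i:T}\|^2 \;\leq\; 18(i-1) + \|\s^*_{i:T}\|^2.
\end{equation}
Dividing the two bounds yields $\frac{\|R\s\|^2}{\|\s\|^2} \geq \bar{M}_{\s^*_{i:T}}$, which is exactly the claimed admissibility, and is valid uniformly over every prefix $\s^*_{1:i-1}$. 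Point (ii) is then immediate: at $i=1$ the prefix sum is empty so the numerator inequality becomes an equality, and $18(i-1)=0$ makes the denominator bound tight as well.

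Given admissibility, the branch-and-bound logic is standard. Whenever Step~2 rejects a partial sequence because $\bar{M}_{\s^*_{i:T}} > r^2$, every full extension has true cost strictly greater than $r^2$ by (i), so no candidate that could improve the current incumbent is ever discarded. At each leaf, Step~4 stores the current $\s^*$ and tightens $r^2$ to its exact objective value (by (ii)); hence after any leaf visit, $r^2$ is a genuine upper bound on the ML objective and, from that point on, the ML sequence can never be pruned. Because the search then exhaustively enumerates the remaining un-pruned leaves, the ML sequence itself is eventually visited, at which moment its cost becomes $r^2$ and it becomes the latest stored sequence, which is exactly what Step~6 returns.

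The main obstacle I anticipate is handling the first outer iteration, when the user-supplied $r$ may already be smaller than the optimal cost so that Step~4 is never triggered. The algorithm handles this by doubling $r$ in Step~6 and restarting; termination follows because the ML objective is a finite number (any fixed constellation sequence provides a finite upper bound), so after finitely many doublings $r^2$ exceeds the ML objective, at least one leaf is visited in the corresponding radius loop, and the branch-and-bound argument above then delivers the exact ML solution. Combined with the admissibility and tightness established in (i)--(ii), this proves Theorem~\ref{thm:correct_nonconstant}.
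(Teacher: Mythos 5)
Your proof is correct and follows essentially the same branch-and-bound argument as the paper: both rest on the fact that $\bar{M}_{\s^*_{i:T}}$ is an admissible lower bound on $\frac{\|R\s\|^2}{\|\s\|^2}$ that is tight at $i=1$, so that pruning never discards the ML sequence and the radius-doubling loop guarantees it is eventually stored. The only difference is one of presentation: you derive the lower-bound property explicitly (which the paper merely asserts in the text preceding the theorem) and argue forward via invariants, whereas the paper argues by contradiction on the final radius; the content is the same.
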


\begin{proof}
We note that the algorithm will terminate after a finite number of doubling the search radius $r$. Moreover, after the final time of doubling radius $r$, the radius will not increase anymore in the subsequence search. Let $\hat{\s}^*$ be the final sequence output by the algorithm. We must have, when the algorithm terminates, $r^2=\bar{M}_{\hat{\s}^*_{1:T}}$. Moreover, we can claim that any sequence $\s^*$ other than $\hat{\s}^*$ must have a partial sequence with metric no smaller than $\bar{M}_{\hat{\s}^*_{1:T}}$; otherwise, the algorithm will explore the full length sequence $\s^*$, and end up giving a final $r^2<\bar{M}_{\hat{\s}^*_{1:T}}$, which is a contradiction.

 Thus, for any sequence $\s^* \neq \hat{\s}^*$, there must be an $i$ such that, for the partial sequence $\s_{i:T}^*$,  $\bar{M}_{\s^*_{i:T}} \geq \bar{M}_{\hat{\s}^*_{1:T}}$. This implies $\bar{M}_{\s^*_{1:T}}$ is no smaller than $\bar{M}_{\hat{\s}^*_{1:T}}$, because $\bar{M}_{{\s}^*_{i:T}}$ is a lower bound on $\bar{M}_{\s^*_{1:T}}$. This proves that indeed $\hat{\s}^*$ has the smallest metric $\bar{M}_{\hat{\s}^*_{1:T}}$.

\end{proof}

 \subsection{Choice of Radius $r$}
For non-coherent massive SIMO systems, we need to provide an initial search radius which insures low computational complexity. For massive SIMO systems adopting 16-QAM, we derive the initial search radius as
\begin{equation}\label{eq:radius-non}
 r^2\leq\frac{2}{45}.
\end{equation}
 This radius insures that the optimal solution is inside the search radius with high probability. We provide the derivation of this radius (namely Lemma \ref{thm:radiusthm_nonconstant}) in Appendix \ref{sec:appendxiradius16QAM}. We also analyze the expected complexity for nonconstant-modulus constellations. In the end, we show that, even for nonconstant-modulus constellations, the expected complexity is also polynomial in channel coherence length and the number of antennas. This analysis will be similar to that of Section \ref{sec:complexity}, but more technically involved. In fact, we show that $r$ can be any constant number close to zero for a sufficiently large number of receive antennas irrespective of the SNR.

 \subsection{Computational Complexity of ML Algorithm for Nonconstant-Modulus Constellations}
 \label{sec:NonconstantModulusComplexity}
  Similar to the case of the algorithm for constant-modulus constellations, we will show that for massive SIMO systems with nonconstant-modulus constellations, as the number of receive antennas grows to infinity, the expected number of visited nodes in each layer will be a constant number, namely $|\Omega|$. Again, to simplify complexity analysis, we further modify Step 6 of the ML algorithm for nonconstant-modulus constellations: ``If any sequence $\s^*$ is ever found in Step 4, output the latest stored full-length sequence as the ML solution; otherwise, let $r=\infty$ and go to 1''.  We also further assume the channel vector $\mathbf{h}$ has independent zero mean unit variance complex Gaussian components, and assume that 16-QAM constellation is used.

 \begin{theorem}
Let $r^2$ be a positive constant smaller than $\frac{2}{45}$. For nonconstant-modulus constellation massive SIMO system with $N$ receive antennas, the expected number of visited points by the ML channel estimation and data detection algorithm at layer $i$ converges to $|\Omega|$ for $i\leq (T-1)$, as $N\rightarrow \infty$. The joint ML algorithm only visits one tree node at layer $i=T$.
\label{thm:ltt1-non}
\end{theorem}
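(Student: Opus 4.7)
The plan is to mirror the proof of Theorem \ref{thm:ltt1}, with the extra bookkeeping needed to handle the energy-dependent denominator $18(i-1)+\|\s^*_{i:T}\|^2$ in the new metric $\bar{M}_{\s^*_{i:T}}$. First, I would analyze the average case in which the decoder uses $A_E=\rho_E I-\frac{E[X^*X]}{N}$ in place of $\rho I-\frac{X^*X}{N}$. For a 16-QAM input with $E[\s^*\s]=\sum_{k=1}^T|\s_k|^2$, the matrix $E[X^*X]/N$ has the same structural form as in (\ref{X*X}), with $\rho_E$ equal to its largest eigenvalue. I would then form the Cholesky factor $\grave{R}$ of $A_E$ and verify, as in (\ref{eq:Mertici3}), that $\|\grave{R}\s\|^2=\s^*A_E\s=0$ for the transmitted sequence $\s^*$. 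Consequently $\bar{M}_{\s^*_{1:T}}^{\grave{R}}=0$ and in fact $\bar{M}_{\s^*_{i:T}}^{\grave{R}}=0$ for every layer $i$.

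Next I would establish the nonconstant-modulus analog of Lemma \ref{thm:ltt2}: for any $\widetilde{\s}^*\neq \s^*$ and for any layer $j\leq i$, where $i$ is the largest index at which $\s^*_i\neq \widetilde{\s}^*_i$, the quantity $\bar{M}_{\widetilde{\s}^*_{j:T}}^{\grave{R}}$ is bounded below by a positive constant strictly greater than $2/45$. The numerator can be lower bounded by a constant multiple of $D_{\min}=4$ at the differing layer by repeating the cascading argument of Lemma \ref{thm:ltt2} applied to $\grave{R}$, while the denominator is trivially upper bounded by $18T$ since 16-QAM has maximum symbol energy $18$. Choosing the constants carefully yields a gap of the form $\bar{M}_{\widetilde{\s}^*_{j:T}}^{\grave{R}}\geq c/T$ for some $c>2/45\cdot 18$, separating wrong sequences from the transmitted one under the search radius bound (\ref{eq:radius-non}). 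This shows that in the average case the modified sphere decoder visits exactly $|\Omega|$ children per layer for $i\leq T-1$ and one node at $i=T$.

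Then I would transfer the conclusion from the average case to the actual case by the same concentration argument used in Theorem \ref{thm:ltt1}. The entrywise weak law of large numbers already established in (\ref{eq:EXP})--(\ref{eq:VAR}) — which holds under 16-QAM as well since it relied only on $|\s_k|^2$ being bounded and the i.i.d. Gaussian assumption on $\h$ and $W$ — gives $\|\frac{X^*X}{N}-\frac{E[X^*X]}{N}\|_F\leq \epsilon$ with probability at least $1-\xi$ for any fixed $\epsilon,\xi>0$ as $N\to\infty$. By continuity of the Cholesky factorization at $A_E$, this implies $\|R-\grave{R}\|_F\leq \epsilon'$ with high probability, and hence $|M_{\widetilde{\s}^*_{i:T}}^R-M_{\widetilde{\s}^*_{i:T}}^{\grave{R}}|\leq \|\widetilde{\s}\|^2\epsilon'\leq 18T\epsilon'$. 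Dividing by the positive denominator $18(i-1)+\|\widetilde{\s}^*_{i:T}\|^2\geq \|\widetilde{\s}^*_{i:T}\|^2>0$ gives the corresponding closeness of the $\bar{M}$ metrics uniformly in the at most $|\Omega|^T$ candidate sequences.

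Finally, combining these steps, for any $\xi>0$ and $N$ large enough, with probability at least $1-\xi$ the only partial sequence of length $T-i+1$ satisfying $\bar{M}_{\widetilde{\s}^*_{i:T}}\leq r^2$ is the transmitted one, so the decoder visits exactly $|\Omega|$ children at layer $i$. On the complementary event of probability at most $\xi$ the number of visited nodes at layer $i$ is trivially bounded by $|\Omega|^{T-i+1}$, so the expected number of visits is upper bounded by $|\Omega|+\xi|\Omega|^{T-i+1}$, which tends to $|\Omega|$ as $\xi\to 0$. The main obstacle is the lower bound in the analog of Lemma \ref{thm:ltt2}: unlike the constant-modulus case, where the bound $M_{\widetilde{\s}^*_{j:T}}\geq TD_{\min}/2$ falls out cleanly from the algebraic identity $\s^*\s=T$, here I have to chase the energy-weighted normalization through the triangular recursion defining $\grave{R}$ and confirm that the resulting bound still exceeds $2/45$ uniformly in $T$; the rest of the argument is essentially bookkeeping on top of the machinery already developed in Section \ref{sec:complexity}.
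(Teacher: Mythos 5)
Your overall architecture is the same as the paper's: analyze the averaged matrix $A_E=\rho_E I-\frac{E[X^*X]}{N}$, show the transmitted sequence has zero metric at every layer via $\bar M_{\s^*_{1:T}}=\s^*(tI-\s\s^*)\s/\|\s\|^2=0$ with $t=\sum_k\|\s_k\|^2$, separate all wrong partial sequences from the radius by an analog of Lemma \ref{thm:ltt2}, and then transfer to $\rho I-\frac{X^*X}{N}$ by concentration plus continuity of the Cholesky factorization. The closing expectation bound $|\Omega|+\xi|\Omega|^{T-i}$ and your observation that the WLLN step only needs bounded symbol energy are both fine.

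The gap is exactly where you flag it, and the resolution you sketch does not work. You lower bound the numerator of $\bar M_{\widetilde\s^*_{j:T}}$ by a constant multiple of $D_{\min}=4$ and divide by the denominator bound $18T$, arriving at a separation of the form $c/T$ with $c$ a fixed constant. But $c/T\to 0$ as $T$ grows, so it cannot exceed the fixed radius threshold $2/45$ uniformly in $T$ --- the very property you then say you must ``confirm.'' The missing ingredient is that the diagonal Cholesky entries themselves grow linearly in $T$: the telescoping recursion collapses to the closed form (\ref{Lem-non}),
\begin{equation*}
L_{i,i}^2=t\Bigl(1-\frac{\|\s^*_{i}\|^2}{\|\s^*_{i:T}\|^2}\Bigr)\;\geq\; 2T\Bigl(1-\frac{18}{18+2}\Bigr)=\frac{T}{5},\qquad i\leq T-1,
\end{equation*}
using $t\geq 2T$ (every 16-QAM symbol has energy at least $2$) and the worst case $\|\s^*_i\|^2=18$, $\|\s^*_{i+1:T}\|^2=2$ at $i=T-1$. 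Hence the numerator is at least $L_{i,i}^2\,\|\widetilde\s^*_i-\s^*_i\|^2\geq (T/5)\cdot 4$, and the factors of $T$ cancel against the denominator $18T$ to give $\bar M_{\widetilde\s^*_{j:T}}\geq 2/45$ \emph{independent of} $T$. (Also note that $\geq 2/45$ suffices: the theorem takes $r^2$ strictly below $2/45$, so you do not need the lemma's bound to be strictly greater than $2/45$.) With that corrected separation lemma, the rest of your argument matches the paper's proof.
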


 Taking the same analysis in Section \ref{sec:complexity}, we can write the maximum eigenvalue of the Hermitian matrix $\frac{{E}[X^*X]}{N}$ as $\rho_{{E}}=\sum^T_{k=1} \|\s_k\|^2+\sigma^2_{w}$. Then we can represent $A=\rho_{{E}} I- \frac{{E}[X^*X]}{N}$ as
\begin{equation}
A=
\begin{bmatrix}
			t-\s_{1}\s^*_{1} & -\s_{1}\s^*_{2}  & \cdots & -\s_{1}\s^*_{T} \\
             -\s_{2}\s^*_{1} & t-\s_{2}\s^*_{2} & \cdots & -\s_{2}\s^*_{T}\\
              \vdots         & \vdots           & \vdots & \vdots \\
             -\s_{T}\s^*_{1} & -\s_{T}\s^*_{2}& \cdots & t-\s_{T}\s^*_{T} \\
             \end{bmatrix} \nonumber.
\end{equation}
Where $t=\sum^T_{k=1} \|\s_k\|^2$. After decomposing $A$ using Cholesky decomposition, we can find the entries of $\grave{R}$ such that $\grave{R}^*\grave{R}$. Then, we can find an expression to the diagonal entries of the $\grave{R}$ as
\begin{equation}\label{eq:Lii-non}
L_{i,i}=\sqrt{t-||\s_{i}||^2-\sum^{i-1}_{j=1}\frac{||\s_{j}||^2||\s_{i}||^2t}{(t-||\s_{1:j-1}||^2)(t-||\s_{1:j}||^2)}}.
\end{equation}

We can find the metric  $\bar{M}_{\s^*_{1:T}} $ of the transmitted signal $\s^*$ as
\begin{align}
\bar{M}_{\s_{1:T}^*}=\frac{\s^*A\s}{\|\s\|^2} =\frac{\s^*(tI-\s\s^*)\s}{\|\s\|^2}=0, \notag
\end{align}
since $\s^* \s=t$. As a result, $\bar{M}_{\s^*_{i:T}}=0$ for any partial sequence $\s^*_{i:T}$ of the transmitted sequence $\s^*_{1:T}$. On the other hand, according to Lemma \ref{thm:radiusthm_nonconstant} (whose proof is given in the appendix), for any other signal $\widetilde{\s}\neq \s$, $\bar{M}_{\widetilde{\s}^*_{j:T}} \geq \frac{2}{45} $ at any layer $j\leq i$, where $i$ is the largest integer such that $ \s^*_{i} \neq \widetilde{\s}^*_{i}$..

\begin{lemma}
Let $\s^*$ be the transmitted data sequence. Let us consider using $\rho_{{E}}I-\frac{{E}[X^*X]}{N}$ for calculating the sequence metric. For any $ \widetilde{\s}^*$
such that $\widetilde{\s}^*\neq \s^*$, $\bar{M}_{\widetilde{\s}_{j:T}^*}\geq \frac{2}{45}$ at any layer $j\leq i$, where $i$ is the largest integer such that $ \s^*_{i} \neq \widetilde{\s}^*_{i}$.
\label{thm:radiusthm_nonconstant}
\end{lemma}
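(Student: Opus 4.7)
The plan is to reduce the claim to the single layer $j=i$, compute the partial metric $M_{\widetilde{\s}^*_{i:T}}$ in closed form by exploiting $M_{\s^*}=0$, and then lower bound the resulting ratio using a simplified expression for $L_{i,i}^2$ obtained from (\ref{eq:Lii-non}) by telescoping. First, for every $j\le i$, the partial-sum metric satisfies $M_{\widetilde{\s}^*_{j:T}}\ge M_{\widetilde{\s}^*_{i:T}}$ since extending the index range only adds non-negative squared row contributions, while $18(j-1)+\|\widetilde{\s}^*_{j:T}\|^2\le 18(i-1)+\|\widetilde{\s}^*_{i:T}\|^2$ because each extra symbol $\widetilde{\s}_k$ with $j\le k\le i-1$ contributes at most the maximum 16-QAM point energy $18$. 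Dividing these two inequalities gives $\bar{M}_{\widetilde{\s}^*_{j:T}}\ge \bar{M}_{\widetilde{\s}^*_{i:T}}$, so it suffices to lower bound the case $j=i$.

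Next, I would exploit that $M_{\s^*}=\s^*A\s=0$ together with $M_{\s^*}=\|R\s\|^2$ forces every row of $R\s$ to vanish, i.e.\ $\sum_{k=m}^T L_{m,k}\s_k=0$ for each $m$. Because $\widetilde{\s}_k=\s_k$ for all $k>i$ by the definition of $i$, every row indexed by $m>i$ in the partial metric still evaluates to zero, and the row $m=i$ reduces to $\sum_{k=i}^T L_{i,k}\widetilde{\s}_k=L_{i,i}(\widetilde{\s}_i-\s_i)=L_{i,i}\delta_i$. Hence $M_{\widetilde{\s}^*_{i:T}}=L_{i,i}^2|\delta_i|^2$, where $|\delta_i|^2\ge 4$ is the minimum squared distance of 16-QAM.

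I would then simplify $L_{i,i}^2$. Writing $e_k=|\s_k|^2\in\{2,10,18\}$ and $t_k=\sum_{m=k+1}^T e_m$ (so $t_0=t$, $e_j=t_{j-1}-t_j$), the identity $e_j/(t_{j-1}t_j)=1/t_j-1/t_{j-1}$ telescopes the sum in (\ref{eq:Lii-non}) and produces the compact formula $L_{i,i}^2=t\,t_i/t_{i-1}$. Because $\s_T$ is fixed in advance to resolve the phase ambiguity, one may assume $i\le T-1$, which ensures $t_i\ge 2$ and $t_{i-1}\le t_i+18\le 20$. Finally, the decomposition $t=t_{i-1}+\|\s_{1:i-1}\|^2$ together with the lower bound $\|\s_{1:i-1}\|^2\ge 2(i-1)$ gives $t\ge t_{i-1}+2(i-1)$.

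Assembling the pieces,
\[
\bar{M}_{\widetilde{\s}^*_{i:T}}\ \ge\ \frac{4\,t\,t_i/t_{i-1}}{18(i-1)+|\widetilde{\s}_i|^2+t_i}.
\]
Using $|\widetilde{\s}_i|^2\le 18$ and the bounds above, the target $\bar{M}_{\widetilde{\s}^*_{i:T}}\ge 2/45$ reduces, after cross-multiplication, to a linear inequality in $i$ (and $t_i$) that can be verified directly. This last bookkeeping is the main obstacle: the constant $2/45$ is in fact asymptotically tight along the configuration $t_i=2$, $e_i=18$, $|\widetilde{\s}_i|^2=18$, $|\delta_i|^2=4$, $i\to\infty$, so the slack is of constant order only and every intermediate estimate must be sharp enough to survive the limit. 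Without the phase-ambiguity convention $i\le T-1$, the degenerate case $t_i=0$ would force $L_{i,i}=0$ and no such universal bound would be possible, which explains why this convention is essential in Theorem~\ref{thm:ltt1-non}.
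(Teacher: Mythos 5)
Your proposal is correct and is essentially the paper's own argument: the same telescoping of (\ref{eq:Lii-non}) to the closed form $L_{i,i}^2=t\,t_i/t_{i-1}=t\bigl(1-\|\s_i\|^2/\|\s_{i:T}\|^2\bigr)$, the same use of $M_{\s^*}=0$ (all rows of $\grave{R}\s$ vanish) to collapse the layer-$i$ contribution to $L_{i,i}(\widetilde{\s}_i-\s_i)$, and the same extremal constants ($|\widetilde{\s}_i-\s_i|^2\ge 4$, per-symbol energies in $[2,18]$, and $i\le T-1$ thanks to the phase-reference symbol, so $t_i\ge 2$). The only divergence is the final bookkeeping: the paper bounds the denominator of $\bar{M}_{\widetilde{\s}^*_{j:T}}$ uniformly over all $j\le i$ by $18T$ and bounds $L_{i,i}^2\ge 2T(1-\tfrac{18}{20})=T/5$, so the $T$'s cancel and $2/45$ drops out in one line; you instead reduce to the layer $j=i$ by a (valid) monotonicity argument and keep the sharper denominator $18(i-1)+|\widetilde{\s}_i|^2+t_i$, which leaves a cross-multiplied inequality that you assert but do not carry out. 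It does hold: with $t\ge 2(i-1)+t_{i-1}$ and $|\widetilde{\s}_i|^2\le 18$ it reduces to $(i-1)(180\,t_i-18\,t_{i-1})+t_{i-1}(89\,t_i-18)\ge 0$, which is nonnegative because $t_i\ge 2$ and $t_{i-1}=|\s_i|^2+t_i\le t_i+18$. Two small cautions: your aside ``$t_{i-1}\le t_i+18\le 20$'' is only the extremal configuration ($t_i=2$), not a general bound, so the verification must use $t_{i-1}\le t_i+18$; and the ``tight'' configuration you cite ($|\s_i|^2=|\widetilde{\s}_i|^2=18$ with $|\widetilde{\s}_i-\s_i|^2=4$) is not realizable in 16-QAM, so it shows tightness of the bound chain rather than of the true minimum of $\bar{M}$ — which is harmless, since the lemma only claims the one-sided bound $2/45$, exactly as the paper's cruder estimates do.
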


Thus if we set $r^2<\frac{2}{45}$, under the expected matrices, the ML non-coherent data detection algorithm will only visit $|\Omega|$ nodes in each layer.  Following similar concentration arguments for the matrix $\rho I- \frac{X^*X}{N}$ in the proof of Theorem \ref{thm:ltt1}, we can similarly prove Theorem \ref{thm:ltt1-non}.

\section{Tree search Algorithm}
\label{sec:newalgorithm}

In the sections above, we consider each partial sequence as a node in a tree structure of $T$ layers. The computational complexity of the earlier algorithms heavily depends on how the initial search radius $r$ is chosen. Although the search radius $r$ is chosen so that the true transmitted sequence is within the sphere with high probability, the radius does not guarantee the minimum number of visited nodes in the tree search.

In this section we design a best-first branch-and-bound tree search algorithm for ML non-coherent data detection that does not need an assigned initial radius $r$. We call this algorithm the Tree Search Algorithm (TSA). In contrast to the algorithm in Sections \ref{sec:problem}, TSA sets the initial search radius as zero at the beginning of the algorithm. Then the radius $r$ in TSA systematically increases until the joint ML solution is found. This algorithm guarantees to visit no more tree nodes than the algorithm in Sections \ref{sec:problem}. We will show that our previous complexity results also upper bound the complexity of TSA. Moreover, we prove that this new TSA applies to nonconstant-modulus constellations.

We first introduce several terminologies about the tree structure we are using. A partial sequence $\widetilde{\s}^*_{i:T}$, $1\leq i\leq T$, corresponds to a layer-$i$ node in the tree. A node $\widetilde{\s}^*_{i:T}=(\widetilde{\s}_i^*,\widetilde{\s}_{i+1:T}^* ) $ is called a child node of its parent node $\widetilde{\s}_{i+1:T}^*$.  The parent node of any layer-$T$ node $\widetilde{\s}_{T}^*$ is called the root node. In a tree, any tree node without a child node is called a leaf node. For example, in  (b) of Figure \ref{fig:TSA}, node 1 is the root node, and node 2 is the parent node of node 9.

\begin{figure}[!htb]
\centering
\subfigure[First search iteration] {\includegraphics[width=1.65 in,height=2.1 in ]{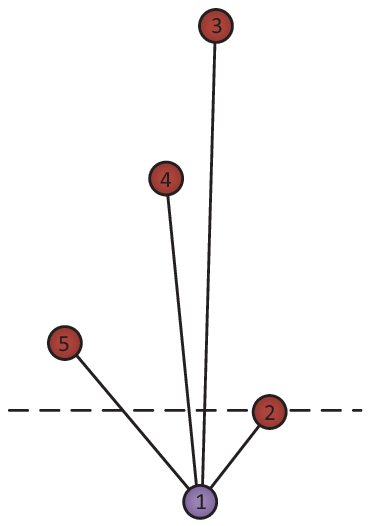}}
\subfigure[Second search iteration]{\includegraphics[width=1.65 in,height=2.1 in ]{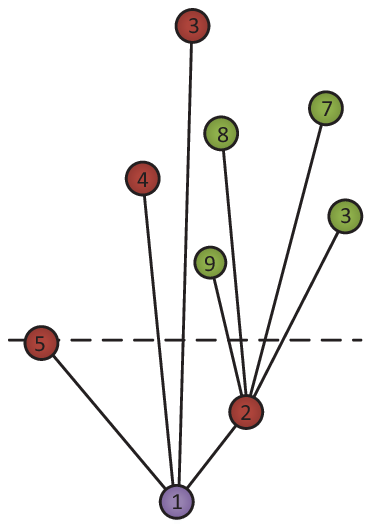}}
\subfigure[Third search iteration] {\includegraphics[width=1.65 in,height=2.1 in ]{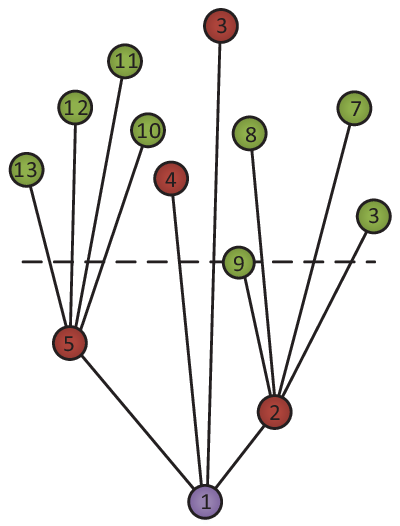}}
\caption{Illustration of tree search algorithm for a tree of $3$ layers}
\label{fig:TSA}
\end{figure}

In the TSA algorithm, we start to construct a tree which has only the root node with metric $0$. Then in each iteration,  the TSA always first finds the leaf node with the smallest metric, which is called the seed node. Then the algorithm expands the tree by adding the seed node's $|\Omega|$ child nodes to the tree, and, moreover, calculates the metrics of all these child nodes.  The tree search algorithm then iterates this process of finding the seed node and expanding the tree, until the selected seed node is a layer-$1$ node, corresponding to a full-length sequence.  The flow of this algorithm is described as below for constant-modulus constellations (for nonconstant-modulus modulations we just need to replace $M_{\widetilde{\s}^*_{i:T}}$ by $\bar{M}_{\widetilde{\s}^*_{i:T}}$ ).

\noindent \emph{\textbf{Tree search algorithm}} \\
Input: matrix $R$ and constellation $\Omega$.
\begin{enumerate}
\item  Add the root node, and set its metric to $0$. Set $r^2=0$;
\item (Find the seed node) Find the leaf node $\widetilde{\s}^*_{i:T}$ which has the smallest metric among all the leaf nodes. Select that leaf node as the seed node.
Update $r^2=M_{\widetilde{\s}^*_{i:T}}$;
\item If the seed node $\widetilde{\s}^*_{i:T}$ is layer-1 node, namely $i=1$, then go to 4; else, add the  $|\Omega|$ child nodes of $\widetilde{\s}^*_{i:T}$ to the tree, compute the metrics of these child nodes, and go to 2;
\item Terminate the algorithm, output $\widetilde{\s}^*_{1:T}$ as the optimal sequence. Output $r^2$ as the smallest possible metric.
\end{enumerate}

Figure \ref{fig:TSA} shows $3$ search iterations for QPSK constellation and $T=3$. The height of a node represents its metric. In (a), the root node $1$ is selected as the seed node, and expands into $4$ child nodes. Then node $2$ is chosen as the seed node, and expands into $4$ child nodes. The expansion of node $2$ is shown in (b). The TSA then finds node $5$ as the next seed node. The third search iteration in (c) expands node $5$ by adding its 4 children. The TSA algorithm then finds node $9$ as the seed node since it has the smallest metric. Since node $9$ is a layer-$3$ node, the algorithm will terminate and output node $9$ as the ML solution.

\subsection{Computational Complexity of TSA}
\label{sec:TSAcomplexity}
In this section, we will show that the TSA algorithm is computationally efficient in terms of the number of visited nodes.
\begin{theorem}
 The TSA outputs the optimal sequence in joint channel estimation and data detection. Let $M$ be the metric of the optimal sequence, and let $l$ be the number of sequences (including partial sequences) that have metrics no bigger than $M$. Then the number of visited points by TSA is no more than $(|\Omega|+1)l$ . Moreover, the TSA algorithm visits no more tree nodes than the sphere decoders in Section \ref{sec:ML} and \ref{sec:NonconstantModulus}.
\label{thm:TSA}
\end{theorem}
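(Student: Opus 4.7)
The theorem bundles three claims: (i) TSA outputs the joint ML sequence, (ii) TSA visits at most $(|\Omega|+1)\ell$ nodes, and (iii) TSA never visits more nodes than the sphere decoders of Sections \ref{sec:ML} and \ref{sec:NonconstantModulus}. My plan is to establish all three by leaning on a single structural fact: the partial-sequence metric (either $M_{\s^*_{i:T}}$ or $\bar M_{\s^*_{i:T}}$, depending on the constellation) is monotonically non-decreasing along any root-to-leaf path, and in particular is a lower bound on the full metric of any completion. The monotonicity in the constant-modulus case is immediate from the recursion $M_{\s^*_{i:T}} = |\sum_{k=i}^T L_{i,k}\s_k|^2 + M_{\s^*_{i+1:T}}$. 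In the nonconstant-modulus case, a short computation shows that from parent to child the numerator of $\bar M$ can only grow while the denominator can only shrink (because $18-\|\s_i\|^2\geq 0$ for 16-QAM), yielding the same monotonicity; this is consistent with the lower-bound property already invoked in the proof of Theorem \ref{thm:correct_nonconstant}.

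With this in hand, optimality comes first. At termination, the chosen seed $\widetilde{\s}^*_{1:T}$ is a layer-$1$ leaf whose metric equals the current $r_f^2$, and it is the minimum-metric leaf in the tree. For any competing full sequence $\s^*_{1:T}$, there is at that moment a leaf ancestor $\s^*_{j:T}$ in the tree (possibly $\s^*_{1:T}$ itself), whose metric must be at least $r_f^2$. The lower-bound property then gives $\bar M_{\s^*_{1:T}}\geq r_f^2$, so $\widetilde{\s}^*_{1:T}$ is the joint ML sequence.

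For the $(|\Omega|+1)\ell$ bound, I would count visited nodes as one root plus $|\Omega|$ per expansion, so only the number of expansions must be bounded. The key observation is that every non-root node expanded by TSA has metric at most $M$: at the instant of its expansion it is the minimum-metric leaf, and the current leaf ancestor of the true ML sequence lives in the tree with metric $\leq M$ by the lower-bound property. Thus expansions are in bijection with a subset of the $\ell$ partial or full sequences of metric $\leq M$, yielding at most $1+|\Omega|\ell$ visits, which is $\leq(|\Omega|+1)\ell$ since $\ell\geq 1$ (the optimal sequence itself is counted).

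For the comparison with the sphere decoder, I would prove the set inclusion $V_{\mathrm{TSA}}\subseteq V_{\mathrm{SD}}$. Any $v\in V_{\mathrm{TSA}}$ is a child of some expanded node $u$ with metric $\leq M$; by the monotonicity above, all ancestors of $u$ also have metric $\leq M$. The sphere decoder's (possibly modified, with radius doubling and the $r=\infty$ safety net) radius is non-increasing within each DFS pass and, on the pass that succeeds, satisfies $r^2\geq M$ throughout, so every metric test on the root-to-$u$ path is passed and the DFS reaches $v$ in its enumeration, placing $v$ in $V_{\mathrm{SD}}$. I expect the main obstacle to be the nonconstant-modulus side of this last claim, where I first have to justify the partial-metric monotonicity before the inclusion argument goes through; once that small computation is in place, the rest is bookkeeping and all complexity bounds proved for the sphere decoder in Sections \ref{sec:complexity} and \ref{sec:NonconstantModulusComplexity} transfer to TSA.
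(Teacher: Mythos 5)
Your proposal is correct and follows essentially the same route as the paper's proof: optimality via the fact that every competing full sequence sits below a remaining leaf whose metric is at least the final $r^2$, the $(|\Omega|+1)l$ count via the observation that only nodes with metric at most $M$ are ever selected as seeds (so visited nodes are such nodes and their direct children), and the sphere-decoder comparison via the fact that the successful pass uses $r^2\geq M$ and hence visits every metric-$\leq M$ node and its children. The only difference is that you explicitly verify the parent-to-child monotonicity of $\bar{M}$ for 16-QAM (numerator non-decreasing, denominator non-increasing since $18\geq\|\s_i\|^2$), a step the paper asserts without computation, and your bookkeeping around the root node is no looser than the paper's own.
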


\begin{proof}
We first notice that every full-length sequence $\widetilde{\s}^*_{1:T}$ is a direct or indirect child node of a leaf node $\widetilde{\s}^*_{i:T}$ existing at the termination of the TSA. However, by the TSA, the metric $M_{\widetilde{\s}^*_{i:T}}$  must be no smaller than the final $r^2$. Since  $M_{\widetilde{\s}^*_{i:T}}$ is a lower bound of $M_{\widetilde{\s}^*_{1:T}}$,  we have  $M_{\widetilde{\s}^*_{1:T}} \geq r^2$ at the termination of the TSA. This proves that the TSA indeed outputs the optimal sequence, and $r^2=M$ at its termination.

 According to its procedure, the TSA algorithm will not visit the child nodes of any node $B$ which has a metric bigger than $M$, namely node $B$ will not be selected a seed node in the tree search. In fact, the TSA will add the full-length optimal sequence and all its (direct or indirect) parent nodes to the tree (because a parent node's metric is  always no bigger than its child node's) even before node $B$ is selected as the seed node. The TSA will then declare the full-length optimal sequence as the solution, and terminates before node $B$ is ever selected as a seed node.  So the TSA algorithm can only visit tree nodes which have metric no bigger than $M$, and possibly their direct child nodes. This gives an upper bound of $(|\Omega|+1)l$ on the total number of visited tree nodes.

To find the optimal sequence, the sphere decoder must have used a radius $r$ such that $r^2 \geq M$. Thus the sphere decoder will visit every tree node with metric no bigger than $M$, and its child nodes. So the number of visited nodes by the sphere decoder must be no smaller than that of the TSA.
\end{proof}

According to Theorem \ref{thm:TSA}, the TSA will also visit a polynomial number of nodes on average, as $N\rightarrow \infty$.

\section{Simulation Results}
\label{sec:simulation}
 In this section, we simulate the performance and complexity of the exact ML algorithm for SIMO systems with $N$ receive antennas, under QPSK and nonconstant-modulus 16-QAM. Channel matrix entries are generated as i.i.d complex Gaussian random variables. We investigate the performance of the ML algorithm for $N$= $10$, $50$, $100$, and $500$ receive antennas. We compare the performance of the joint ML non-coherent data detection algorithm with sub-optimal iterative and non-iterative channel estimation and data detection schemes. We use least square (LS) and minimum mean square error (MMSE) channel estimation for the iterative and non-iterative detection schemes (the reader may refer to \cite{LS&MMSE} for the LS and MMSE channel estimation).

 In each channel coherent block, we embed one symbol which is known by the receiver to resolve channel phase ambiguity at layer $T$ of the data sequence. In the non-iterative channel estimation scheme, the receiver estimates the channel vector using this training symbol. Then, the receiver uses this estimated channel vector to detect the remaining $T-1$ transmitted symbols. The iterative suboptimal scheme exploits the detected data vector from the pervious iteration to obtain a new channel estimation, which, in turn, is used for data detection in the current iteration.  The iterative joint channel estimation and data detection scheme runs 100 iterations for each channel coherence block.

 In Figures \ref{LS8}, \ref{LS20}, \ref{MMSE8}, and \ref{MMSE20}, under the QPSK modulation, the symbol error rate (SER) of the ML algorithm is evaluated as a function of SNR for $T=8$ and $20$ respectively, along with the SER of data detection based on the iterative and non-iterative LS and MMSE channel estimations. It can be seen that the ML algorithm outperforms the LS and MMSE iterative and non-iterative channel estimation schemes. For example, from Figures \ref{LS8} and \ref{MMSE8}, we see more than 2 dB improvement over the iterative channel estimation and data detection, and 3 dB improvement over the non-iterative channel estimation and data detection for $N$=$100$, at $10^{-2}$ SER. In Figures \ref{LS20} and \ref{MMSE20}, the ML detector provides a performance improvement of 2 dB over the iterative scheme and 4.5 dB improvement over the non-iterative scheme, at $10^{-2}$ SER.

 We further evaluate the complexities of both sphere decoder and the TSA for QPSK constellation by the average number of visited nodes in each coherence block. In Figure \ref{proposed radius}, we obtain the average number of visited nodes for $T$=$20$ at different SNR values. We use our proposed search radius $r^2=\frac{T}{3}$ for the sphere decoder. It can be seen that when $N$ increases, the number of visited nodes significantly decreases. In fact, the average number of visited nodes for $N$=$500$ is steady at $76$, namely the cardinality of the QPSK constellation multiplied by ($T-1$) layers. This is consistent with our theoretical prediction in Theorem \ref{thm:ltt1}. In addition, the TSA further reduces the complexity, compared with the sphere decoder ML algorithm. At SNR $=-4$ dB, our algorithms on average visit only around several hundred nodes for $N=50$, and only $76$ nodes for $N=500$. In comparison, the exhaustive search method will need to examine $4^{19} \approx 2.75\times 10^{11}$ hypotheses for each coherence block. Our algorithms achieve complexity reduction in many orders of magnitude across a wide range of $N$.

 Figure \ref{NONperfomance} describes the performance of ML channel estimation and data detection algorithm for the nonconstant-modulus 16-QAM constellation. We choose the the coherent time $T=12$, and $N=50, 100$ and $500$. We can see that our novel joint ML algorithms provides nearly $5$ dB gain over iterative joint MMSE channel estimation and data detection algorithms. Under 16-QAM, Figure \ref{NONconstantComplexity} presents the average number of visited nodes, under different SNR values, for sphere decoders with $r^2=\frac{2}{45}$ and for the TSA. The average is taken over $10^3$ channel coherence blocks. Both algorithms achieve surprisingly low average computational complexity. Note that in order to do exhaustive search, one would need to examine $16^{11}$=$1.76 \times 10^{13}$ hypotheses in each coherence block. For SNR above $-4$ dB, on average the TSA visits only $176$ nodes, a $10^{11}$-fold reduction in complexity compared with exhaustive search.

  We further extend our SIMO joint ML channel estimation and data detection algorithm to uplink data detection in massive MIMO systems with $M$ users. These $M$ users employ orthogonal training sequences with length $M$.  First, we estimate the channel using $M$ orthogonal training sequences. Then, based on MMSE channel estimation from training sequences, we use MMSE data detection to decode the transmitted symbols to $\hat{\textbf{S}}^*$, where $\hat{\textbf{S}}^*$ is an matrix of dimension $M \times T$ containing $M$ users'data. Next, we use the detected signal $\hat{\textbf{S}}^*$ to perform MMSE channel estimation again. Now for each user $j$,  after subtracting the interference from the other $(M-1)$ users using their estimated channels and detected data, we perform joint ML channel estimation and data detection (\ref{eq:mixed}) for user $j$ separately. Namely, for user $j$,  the equivalent optimization problem is given as follows:
 $$ \min_{\h_j, \s^*_{j} \in \Omega^T} \| \overline{X}_j-\hat{\h}_{j} \hat{\s}_j^*\|^2, $$
  where $\overline{X}_j={X-\sum^M_{i\neq j} \hat{\h}_{i} \hat{\s}_i^*}$, $1 \leq i,j \leq M$, and $\hat{\h}_{i}$ and $\hat{\s}_i^*$ are estimated channel and detected data for user $i$ respectively. After we have detected $M$ users' data using (\ref{eq:mixed}), we will use the newly detected data to renew MMSE channel estimation for this MIMO system.  We perform MMSE MIMO channel estimation and SIMO joint channel estimation and data detection (\ref{eq:mixed})  iteratively for $10$ times.

Figure \ref{MIMO} shows the performance of this proposed data detection scheme for a massive MIMO system with 4 users, and different numbers of receive antennas at the BS. We employ QPSK modulation, and assume a channel coherence time $T$=$20$. We compare our scheme with iterative MMSE channel estimation and data detection scheme, and non-iterative MMSE channel estimation and data detection. For non-iterative channel estimation and data detection, we will perform one-time MMSE data detection based on the MMSE channel estimation from training sequences.  In iterative MMSE channel estimation and data detection, after we get the detected data from MMSE data detection, we re-estimate the MIMO channel using both training sequences and detected data. This progress is iterated for 10 times. From Figure \ref{MIMO},  we observe that our algorithm employing the SIMO joint channel estimation and data detection algorithm achieves better performance than iterative MMSE channel estimation and data detection. For instance, for $N$=$50$ and SER=$10^{-2}$, our SIMO joint channel estimation and data detection algorithm has roughly $2$ dB gain over non-iterative MMSE channel estimation and data detection, and $1$ dB gain over iterative MMSE channel estimation and data detection scheme. For $N$=$100$, our SIMO joint channel estimation and data detection algorithm has $2$ dB gain over non-iterative MMSE channel estimation and data detection, and $1.5$ dB gain over iterative channel estimation and data detection scheme at the same SER.

\section{Conclusions and Future Work}
\label{sec:conclusion}
To the best of our knowledge, this paper shows, for the first time, the performance of joint ML channel estimation and data detection algorithm of massive SIMO wireless systems, for both constant-modulus and nonconstant-modulus constellations. We have shown that, as the number of receive antennas grows large, the expected complexity of our proposed algorithm is polynomial in the channel coherence time, and the number of receive antennas. Simulation results show that the ML algorithm has better performance than suboptimal noncoherent data detection schemes. In addition, our simulation results verify our theoretical predictions.

It is very interesting to further explore designing efficient joint ML channel estimation and data detection for general massive MIMO systems with multiple users or transmit antennas. Such algorithms will be very useful in reducing pilot contaminations in general massive MIMO systems.

\begin{figure}[!htb]
\centering
\includegraphics[width =3.5 in]{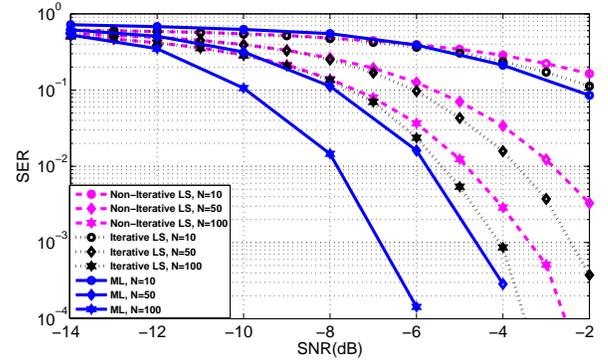}
\caption{SER vs SNR for joint ML channel estimation and data detection, iterative and non-iterative LS channel estimation for $T=8$ and QPSK.}
\label{LS8}
\end{figure}

\begin{figure}[!htb]
\centering
\includegraphics[width =3.5 in]{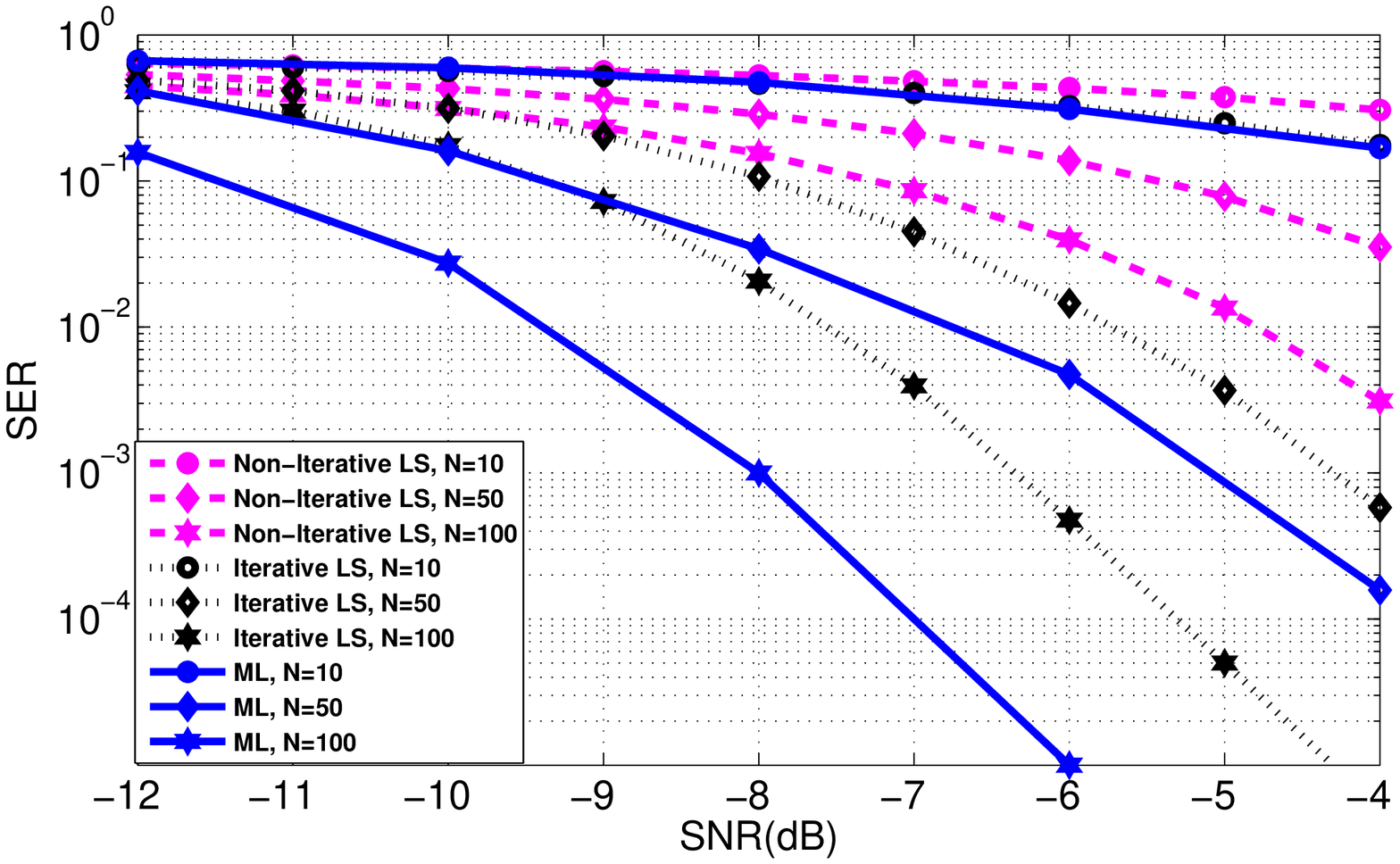}
\caption{SER vs SNR for joint ML channel estimation and data detection, iterative and non-iterative LS channel estimation with $T=20$ and QPSK modulation.}
\label{LS20}
\end{figure}

\begin{figure}[!htb]
\centering
\includegraphics[width =3.5 in]{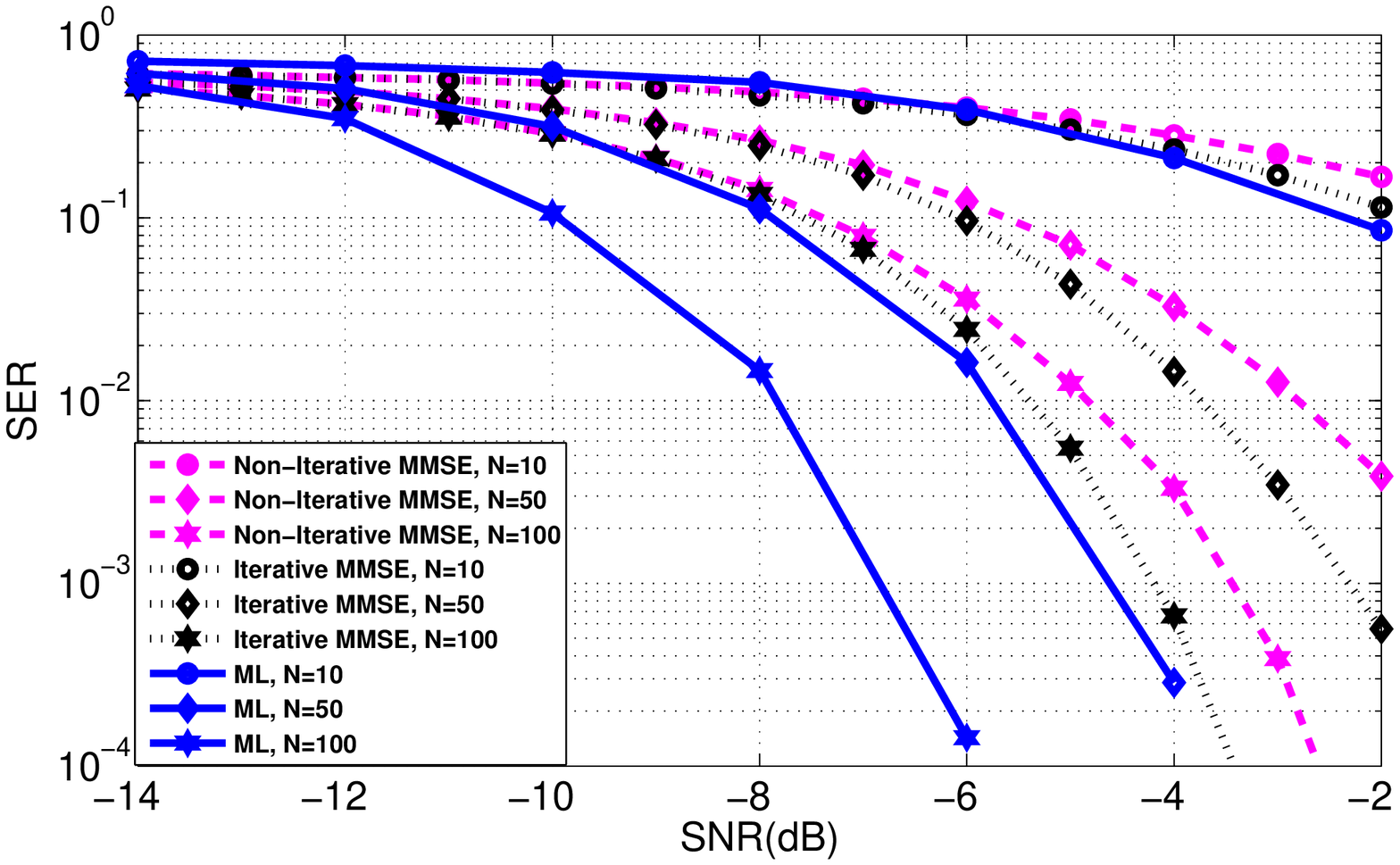}
\caption{SER vs SNR for joint ML channel estimation and data detection, iterative and non-iterative MMSE channel estimation with $T=8$ and QPSK modulation.} \label{MMSE8}
\end{figure}

\begin{figure}[!htb]
\centering
\includegraphics[width =3.5 in]{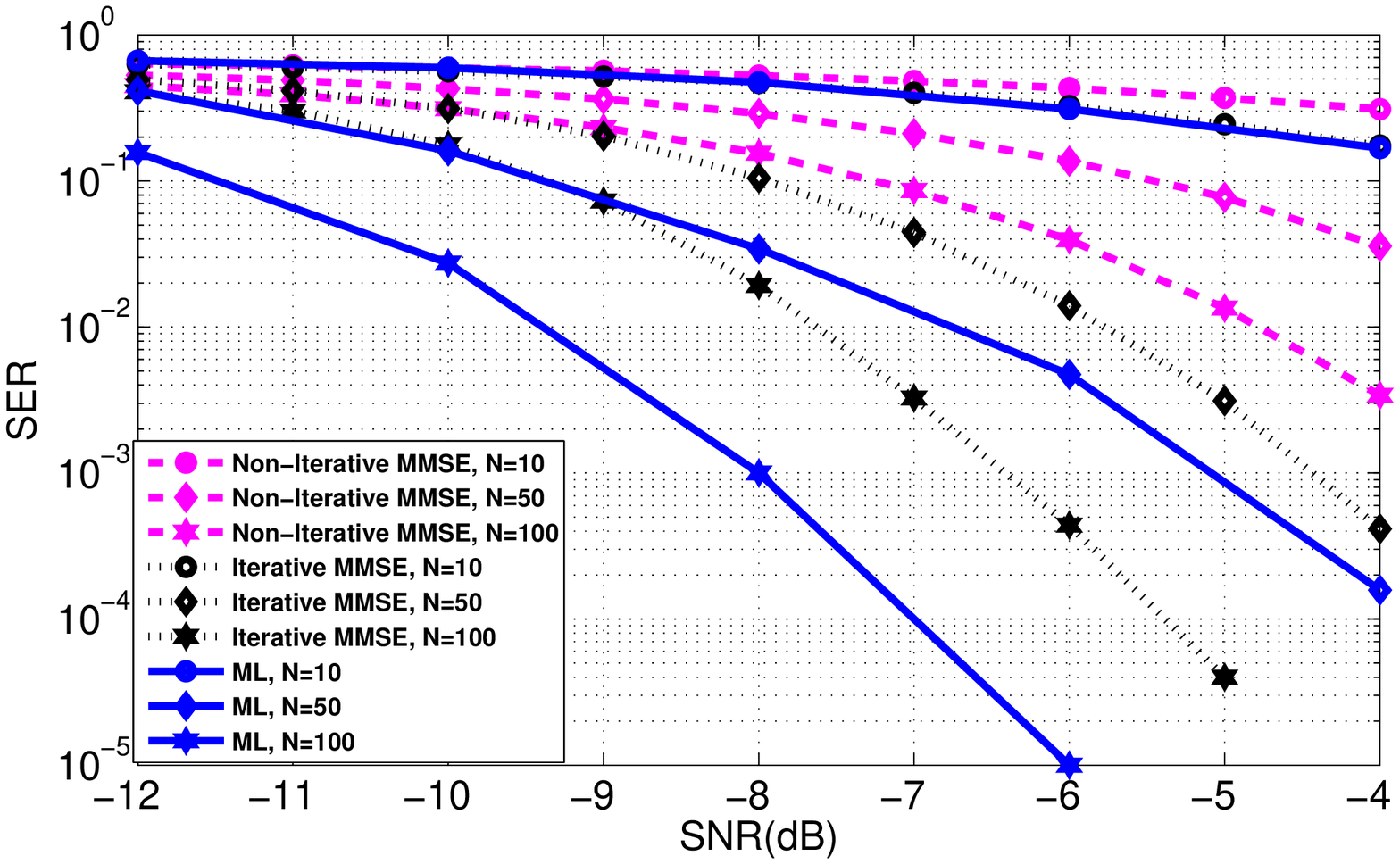}
\caption{SER vs SNR for joint ML channel estimation and data detection, iterative and non-iterative MMSE channel estimation with $T=20$ and QPSK modulation.} \label{MMSE20}
\end{figure}

\begin{figure}[!htb]
  \centering
  \includegraphics[width =3.5 in]{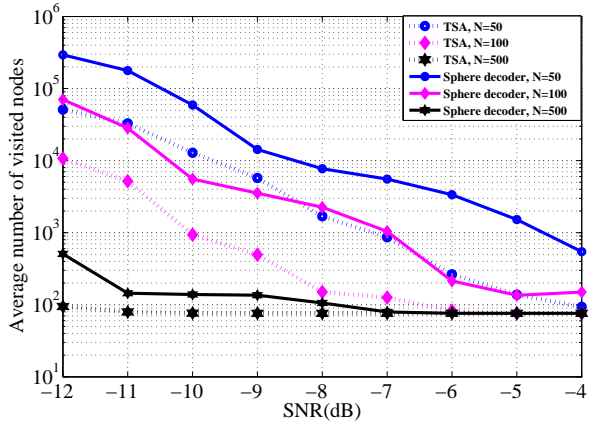}\\
  \caption{Average number of visited points for $T=20$ and QPSK modulation. Exhaustive search will instead need to examine $2.75\times 10^{11}$ hypotheses. }\label{proposed radius}
\end{figure}

\begin{figure}[!htb]
\centering
\includegraphics[width =3.5 in]{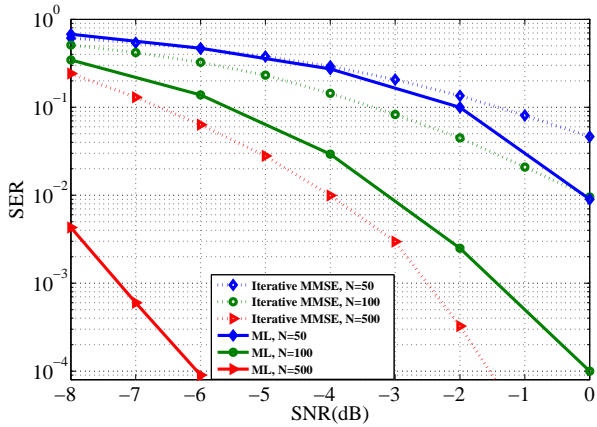}
\caption{SER vs SNR, for joint ML channel estimation and data detection and iterative MMSE channel estimation with $T=12$ and 16-QAM.} \label{NONperfomance}
\end{figure}

\begin{figure}[!htb]
\centering
\includegraphics[width =3.5 in]{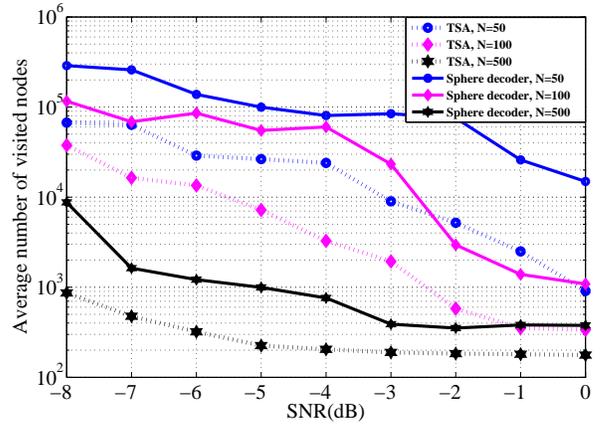}
\caption{Average number of visited points, $T$=12 with 16-QAM. Exhaustive search will instead need to examine $1.76\times 10^{13}$ hypotheses. } \label{NONconstantComplexity}
\end{figure}

\begin{figure}[!htb]
\centering
\includegraphics[width =3.5 in]{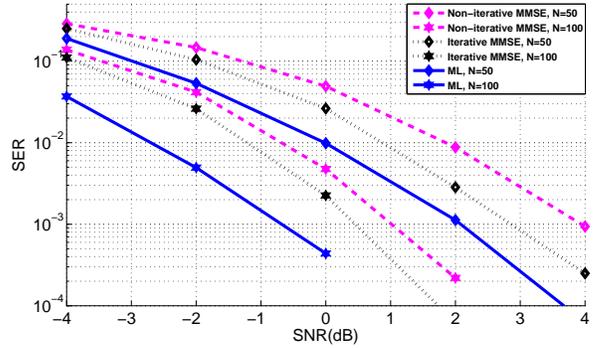}
\caption{SER vs SNR for joint ML channel estimation and data detection, iterative and non-iterative MMSE of MIMO wireless system, $T=14$ and $M=4$}
\label{MIMO}
\end{figure}
\vspace{-3ex}

\appendices
\section{Proof of Lemma \ref{thm:ltt2}}

\begin{proof}
For any $ \widetilde{\s}^* \neq \s^*$, let $i$ be the closest integer to $T$ such that $\s^*_i \neq \widetilde{\s}^*_{i}$, where $1 \leq i \leq T-1$.  Then we can find the metric of $\widetilde{\s}^*_{i:T}$ based on (\ref{eq:Mertici1})
\begin{align}
M_{\widetilde{\s}^*_{i:T}} &= |\sum^{T}_{k=i} L_{i,k} \widetilde{\s}_k|^2+M_{\widetilde{\s}^*_{i+1:T}}\notag\\
  &=|\sum_{k=i+1}^{T} L_{i,k} \s_k+L_{i,i} \widetilde{\s}_i|^2,\notag
\end{align}
where $\widetilde{\s}^*_{i+1:T}=\s^*_{i+1:T}$, and $M_{\widetilde{\s}^*_{i+1:T}}=M_{{\s}^*_{i+1:T}}=0$ as proved in Theorem \ref{thm:ltt1}. Now we can write (\ref{eq:Metric}) as
\begin{align}
M_{\widetilde{\s}^*_{i:T}} &= |\sum_{k=i}^{T} L_{i,k} \s_k- L_{i,i} {\s}_i+ L_{i,i} \widetilde{\s}_i|^2\notag\\
  &=|- L_{i,i} {\s}_i+ L_{i,i} \widetilde{\s}_i|^2\notag\\
  &=|L_{i,i} ({\widetilde{\s}_i-\s}_i)|^2,\notag\\ \nonumber
\end{align}
where we have used the fact that $\sum_{k=i}^{T} L_{i,k} \s_k=0$, as shown in the proof of Theorem \ref{thm:ltt1}. Since ${\widetilde{\s}_i-\s}_i \neq 0$ by assumption, and $L_{i,i}\neq 0$ for $i\neq T$ according to Lemma \ref{thm:ltt3}, $M_{\widetilde{\s}^*_{i:T}}$ will not be zero either.

When $ \widetilde{\s}^* \neq \s^*$,  $M_{\widetilde{\s}^*}$ is thus lower bounded by $|L_{i,i} ({\widetilde{\s}_i-\s}_i)|^2$, $i<T$. The smallest possible value for $|L_{i,i} ({\widetilde{\s}_i-\s}_i)|^2$ is given by $i=T-1$ (see Lemma \ref{thm:ltt3}) and $|({\widetilde{\s}_i-\s}_i)|^2=\min_{s_1,s_2 \in \Omega, s_1\neq s_2} |s_1-s_2|^2$.

\end{proof}

\section{Lemma \ref{thm:ltt3} and its proof}
\begin{lemma}
$L_{i,i}\geq \sqrt{T/2}$ for any $1\leq i \leq T-1$, and $L_{T,T}$ is equal to zero.
\label{thm:ltt3}
\end{lemma}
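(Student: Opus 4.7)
The plan is to evaluate the closed-form expression for $L_{i,i}$ stated just before the lemma, namely
$$L_{i,i}=\sqrt{(T-1)-\sum_{j=1}^{i-1} \frac{T}{(T-(j-1))(T-j)}},$$
simplify the sum by a telescoping argument, and then read off both the uniform lower bound for $i\leq T-1$ and the vanishing value at $i=T$.

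First I would observe the partial-fraction identity
$$\frac{T}{(T-j+1)(T-j)}=T\!\left(\frac{1}{T-j}-\frac{1}{T-j+1}\right),$$
so that the sum from $j=1$ to $i-1$ telescopes cleanly to $\frac{T}{T-i+1}-1$. Substituting back, this yields the compact expression
$$L_{i,i}^{2}=T-\frac{T}{T-i+1}=T\cdot\frac{T-i}{T-i+1}.$$
From this the lemma is essentially immediate: plugging in $i=T$ gives $L_{T,T}^{2}=0$, while for $1\leq i\leq T-1$ the factor $\frac{T-i}{T-i+1}$ is an increasing function of $T-i\geq 1$, so its minimum over the allowed range is attained at $i=T-1$, where it equals $\tfrac{1}{2}$. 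Therefore $L_{i,i}^{2}\geq T/2$, i.e.\ $L_{i,i}\geq\sqrt{T/2}$, for every $1\leq i\leq T-1$.

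The only potential obstacle is a minor bookkeeping one: one should confirm that the formula for $L_{i,i}$ quoted in the paper is correct by checking it against $\grave{R}$ in (\ref{eq:Rmat}). For $i=1$ the formula yields $L_{1,1}=\sqrt{T-1}$, for $i=2$ it yields $\sqrt{T-1-\tfrac{1}{T-1}}$, and for $i=3$ it yields $\sqrt{T-1-\tfrac{1}{T-1}-\tfrac{T}{(T-1)(T-2)}}$, all of which match the displayed diagonal entries of $\grave{R}$. After this sanity check, the telescoping computation above completes the proof, and no further machinery is needed. I would also note in passing that the bound is tight: it is achieved with equality at $i=T-1$, which is exactly the worst case invoked at the end of the proof of Lemma \ref{thm:ltt2}.
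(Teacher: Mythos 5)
Your proof is correct and follows essentially the same route as the paper: the paper likewise telescopes the sum in the closed-form expression to obtain $L_{i,i}=\sqrt{T-\tfrac{T}{T-(i-1)}}$, from which $L_{T,T}=0$ and the bound $L_{i,i}\geq\sqrt{T/2}$ (with equality at $i=T-1$) follow immediately. Your additional sanity check against the displayed entries of $\grave{R}$ is a harmless extra that the paper omits.
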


\begin{proof}
\begin{align}\label{Livalue}
		L_{i,i}&=\sqrt {(T-1)-\sum_{j=1}^{i-1} \frac{T}{(T-(j-1))(T-j)}}\notag\\
			   &=\sqrt {(T-1)+\sum_{j=1}^{i-1} \left( \frac{T}{(T-(j-1))}-\frac{T}{(T-j)}\right)} \notag\\
               &=\sqrt{T-\frac{T}{T-(i-1)}}.\notag
\end{align}
 When $i=T$, (\ref{Livalue}) will be
\begin{align}
		L_{T,t}&=\sqrt{T-\frac{T}{T-(T-1)}}\notag\\
			   &=0.\nonumber
\vspace{-6ex}\end{align}
We can also see that $L_{i,i}\geq \sqrt{T/2}$ for any $i<T$, taking equality when $i=T-1$.
\vspace{-3ex}\end{proof}

\section{Derivation of $var[(X^*X)_{i,j}/N]$ in (\ref{eq:VAR})}
\label{appendix:variance}
\begin{proof}
\begin{align}
&var[(X^*X)_{i,j}] \notag\\
&=var[\sum^N_{k=1}{B_k  }]
= \sum^N_{k=1}var(B_{k})\notag\\
&=\sum^N_{k=1}({E}[B_k B_k^*]-{E}[B_k]{E}[B_k^*])\nonumber
\end{align}
where $B_{k}=(\s_i^*\h_k +\w_{k,i})^*(\s_j^*\h_k +\w_{k,j})$. By expansion, we have
\begin{flalign*}
  E[B_k B_k^*] & =\underbrace{\s_i\s^*_j\s^*_i\s_j}_{=1} \h^*_k \h_k\h^*_k \h_k+\s_i\s^*_j\h^*_k \h_k \w^*_{k,j} \w_{k,i}\nonumber\\
 &+ \underbrace{\s_i\s^*_j \s^*_i \h^*_k \h_k\h_k \w^*_{k,j}}_{=0}+\underbrace{\s_i\s^*_j \s_j \h^*_k \h_k\w_{k,i} \h^*_k}_{=0} \nonumber\\
 &+ \s_j\s^*_i\w^*_{k,i} \w_{k,j}\h^*_k \h_k+\w^*_{k,i} \w_{k,j}\w^*_{k,j} \w_{k,i}\nonumber\\
 &+\> \underbrace{\s^*_i\w^*_{k,i} \w_{k,j}\h_k \w^*_{k,j}}_{=0}+\underbrace{\s_j\w^*_{k,i} \w_{k,j}\w_{k,i} \h^*_k}_{=0}\notag\\
 &+\> \underbrace{\s_i\s_j\s^*_i \h^*_k \w_{k,j}\h^*_k \h_k}_{=0}+\underbrace{\s_i \h^*_k \w_{k,j}\w^*_{k,j} \w_{k,i}}_{=0}\notag\\
 &+\> \underbrace{\s_i \s^*_i}_{=1}\h^*_k \w_{k,j}\h_k \w^*_{k,j}+\s_i \s_j\h^*_k \w_{k,j}\w_{k,i} \h^*_k\notag\\
 &+\> \underbrace{\s^*_j \s_j\s^*_i\w^*_{k,i} \h_k\h^*_k \h_k}_{=0}+\underbrace{\s^*_j \w^*_{k,i} \h_k\w^*_{k,j} \w_{k,i}}_{=0}\notag\\
 &+\> \s^*_j \s^*_i\w^*_{k,i} \h_k\h_k \w^*_{k,j}+\underbrace{\s^*_j \s_j}_{=1} \w^*_{k,i} \h_k\w_{k,i} \h^*_k\notag\\
 \end{flalign*}
 Since we already assume that the entries of $\h$ are rotationally-invariant complex Gaussian with unit variance, then we can write $\h_k$ as $a+b\sqrt{-1}$, where $a$ and $b$ are independent, and both follow Gaussian distribution $\mathcal{N}(0,\frac{1}{2})$. Thus ${E}[\h_{k}^2]={E}[(\h_{k}^*)^2]=0$.
Furthermore,
\begin{flalign}
  {{E}[|\h_k|^4]}&= {E}[(a^2+b^2)^2]={E}[a^4+b^4+2 a^2 b^2]\notag\\
&=3\sigma^4_{a}+3\sigma^4_{b}+2\sigma^2_{a}\sigma^2_{b}\notag\\
&=2\times 3\times(\frac{1}{2})^2+\frac{2}{4}=2,
\end{flalign}
where $\sigma^2_{a}=\frac{1}{2}$ and $\sigma^2_{b}=\frac{1}{2}$ are respectively the variance of $a$ and $b$. In the same way, we can find ${{E}[|\w|^4]}=2\sigma^4_w$.

Thus, when $i\neq j$,
\begin{flalign}
  {E}[B_{k} B_{k}^*] &= {E}[|\h_k|^4]+{E}[|\w_{k,i}|^2]{E}[|\w_{k,j}|^2]\notag\\
  &+{E}[|\h_k|^2]{E}[|\w_{k,i}|^2]+{E}[|\h_k|^2]{E}[|\w_{k,j}|^2]\notag\\
  &=2+\sigma^4_{w}+2\sigma^2_{w}.
\end{flalign}

When $i=j$,
\begin{flalign}
  {E}[B_k B_k^*] &= \underbrace{{E}[|\h_k|^4]}_{=2}+\underbrace{{E}[|\w_{k,i}|^4]}_{=2\sigma^4_{w}}\notag\\
  &+\underbrace{{E}[|\h_k|^2]{E}[|\w_{k,i}|^2]}_{=\sigma^2_{w}}+\underbrace{{E}[|\h_k|^2]{E}[|\w_{k,i}|^2]}_{=\sigma^2_{w}}\notag\\
  &+\underbrace{{E}[|\h_k|^2]{E}[|\w_{k,i}|^2]}_{=\sigma^2_{w}}+\s^2_{i}\underbrace{{E}[(\h_k^*)^2]{E}[(\w_{k,i})^2]}_{=0}\notag\\
  &+(\s^2_{i})^*\underbrace{{E}[(\h_k)^2]{E}[(\w^*_{k,i})^2]}_{=0}+\underbrace{{E}[|\h_k|^2]{E}[|\w_{k,i}|^2]}_{=\sigma^2_{w}}\notag\\
  &=2+2\sigma^4_{w}+4\sigma^2_{w}.\notag\\
\end{flalign}

Moreover, after some algebra,
$$E[B_k]E[B_k^*]=\begin{cases}
    1+2\sigma^2_{w}+\sigma^4_{w},  & \text{if } i= j\\
    \s_{i}\s^*_{j}\s_{j}\s^*_{i}=1,  & \text{otherwise}.
\end{cases}$$

Finally,
\begin{flalign}
var(B_k)&={E}[B_k B_k^*]-{E}[B_k]{E}[B_k^*]\notag\\
&=\begin{cases}
    1+2\sigma^2_{w}+\sigma^4_{w},  & \text{if } i= j\\
    1+2\sigma^2_{w}+\sigma^4_{w},  & \text{otherwise}
\end{cases}
\end{flalign}
This leads to
\begin{equation}
var(\frac{(X^*X)_{i,j}}{N})=(1+2\sigma^2_{w}+\sigma^4_{w})/N.
\end{equation}

\end{proof}

\section{Proof of Lemma \ref{thm:radiusthm_nonconstant}}
\label{sec:appendxiradius16QAM}
\begin{proof}
Let us recall that $t=\sum^{T}_{i=1} \|\s_{i}^*\|^2$.

\begin{flalign}
  L_{i,i}&=\sqrt{t-||\s_{i}^*||^2-\sum^{i-1}_{j=1}\frac{||\s^*_{j}||^2||\s^*_{i}||^2t}{(t-||\s^*_{1:j-1}||^2)(t-||\s^*_{1:j}||^2)}}\notag\\
 &=\sqrt{t-||\s^*_{i}||^2+\sum^{i-1}_{j=1}[\frac{||\s^*_{1:j-1}||^2||\s^*_{i}||^2}{t-||\s^*_{1:j-1}||^2}-\frac{||\s^*_{1:j}||^2||\s^*_{i}||^2}{t-||\s^*_{1:j}||^2}]}\notag\\
 &=\sqrt{t-||\s^*_{i}||^2-\frac{||\s^*_{1:i-1}||^2||\s^*_{i}||^2}{t-||\s^*_{1:i-1}||^2}}\notag\\
 &=\sqrt{t(1-\frac{||\s^*_{i}||^2}{||\s^*_{i:T}||^2})} \label{Lem-non}
\end{flalign}
We can see that for any $i\neq T$, $\frac{||\s^*_{i}||^2}{||\s^*_{i:T}||^2}\neq 1$ and thus $L_{i,i} \neq 0$. However, when $i=T$,
$$L_{T,T}=\sqrt{t(1-\frac{||\s^*_{T}||^2}{||\s^*_{T:T}||^2})}=0.$$

 For any $ \widetilde{\s}^*$ such that $\widetilde{\s}^*\neq \s^*$, let $i$ be the largest integer such that $ \s^*_{i} \neq \widetilde{\s}^*_{i}$. Then for any $j\leq i$,

$$\bar{M}_{\widetilde{\s}_{j:T}^*} \geq  \frac{L_{i,i}^2}{\|\s^*_{j:T}\|^2+18(j-1)} \|\widetilde{\s}^*_{i}-\s^*_{i} \|^2.$$

We would like to give a lower bound on the right side of the equation above. We first lower bound $L_{i,i}^2=t(1-\frac{||\s^*_{i}||^2}{||\s^*_{i:T}||^2})$. The smallest possible value for $t$ is $t=2T$ (achieved when every symbol is in the form of $\pm 1 \pm j$ ), and the largest possible value for $\frac{||\s_{i}||^2}{||\s_{i:T}||^2}$ is $i=T-1$, $\| \s_{T-1}\|^2=18$, and $\|\s_{T}\|^2=2$. Thus $L_{i,i}^2$ is lower bounded by $2T(1-\frac{18}{18+2})=T/5$.  Furthermore, the smallest possible value for  $\|\widetilde{\s}^*_{i}-\s^*_{i} \|^2=4$, and the largest possible value for $\|\s^*_{j:T}\|^2+18(j-1)$ is $18 T$. This in turn gives $\bar{M}_{\widetilde{\s}_{j:T}^*} $ a lower bound of $4 \times (T/5) /(18T)=2/45.$
\end{proof}

\bibliographystyle{IEEEtran}
\bibliography{refs}

\end{document}